\documentclass[aps,prx,amssymb,nofootinbib,amsmath,amsfonts,twocolumn,floatfix,superscriptaddress]{revtex4-2}

\usepackage[english]{babel}
\usepackage{amsthm}
\usepackage[toc,page]{appendix}
\usepackage[colorlinks=true,citecolor=blue,linkcolor=magenta]{hyperref}
\usepackage{bbm}
\usepackage{graphicx}
\usepackage{stackengine,xcolor}
\usepackage{mathrsfs}  

\newcommand{\poly}{\operatorname{poly}}

\newcommand{\Tr}{{\rm Tr}}
\newcommand{\Span}{{\rm span}}

\newcommand{\g}{\mathfrak{g}}
\newcommand{\su}{\mathfrak{su}}
\newcommand{\so}{\mathfrak{so}}
\newcommand{\spf}{\mathfrak{sp}}

\renewcommand{\vec}[1]{\boldsymbol{#1}}

\newcommand{\id}{\openone}
\newcommand{\ad}{^{\dagger}}
\newcommand{\ket}[1]{|#1\rangle}
\newcommand{\bra}[1]{\langle #1|}
\newcommand{\ketbra}[2]{|#1\rangle\!\langle #2|}

\newcommand{\twirl}{\TC^{(t)}_{G}\left[ X\right]}
\newcommand{\comm}{\CC^{(t)}_{G}}
\newcommand{\brauer}{\mathfrak{B}_t(-d)}

\newcommand{\BC}{\mathcal{B}}
\newcommand{\CC}{\mathcal{C}}

\newcommand{\GC}{\mathcal{G}}
\newcommand{\HC}{\mathcal{H}}

\newcommand{\LC}{\mathcal{L}}

\newcommand{\NC}{\mathcal{N}}
\newcommand{\OC}{\mathcal{O}}

\newcommand{\SC}{\mathcal{S}}
\newcommand{\TC}{\mathcal{T}}

\newcommand{\SPBB}{\mathbb{SP}}

\newtheorem{theorem}{Theorem}

\newtheorem{corollary}{Corollary}

\newtheorem{proposition}{Proposition}

\begin{document}

\title{Architectures and random properties of symplectic quantum circuits}

\author{Diego Garc\'ia-Mart\'in}
\affiliation{Information Sciences, Los Alamos National Laboratory, 87545 NM, USA}
	
\author{Paolo Braccia}
\affiliation{Theoretical Division, Los Alamos National Laboratory, 87545 NM, USA}
	
\author{M. Cerezo}
\thanks{cerezo@lanl.gov}
\affiliation{Information Sciences, Los Alamos National Laboratory, 87545 NM, USA}

\begin{abstract}
Parametrized and random unitary (or orthogonal) $n$-qubit circuits play a central role in quantum information. As such, one could naturally assume that circuits implementing  symplectic transformations would attract similar attention. However, this is not the case, as $\mathbb{SP}(d/2)$---the group of $d\times d$ unitary symplectic matrices---has thus far been overlooked. In this work, we aim at starting to fill this gap. We begin by presenting  a universal set of generators $\mathcal{G}$ for the symplectic algebra $\mathfrak{sp}(d/2)$, consisting of one- and two-qubit Pauli operators acting on neighboring sites in a one-dimensional lattice. Here, we uncover two critical differences between such set, and equivalent ones for unitary and orthogonal circuits. Namely, we find that the operators  in $\mathcal{G}$  cannot generate arbitrary local symplectic unitaries and that they are not translationally invariant. We then review the Schur-Weyl duality between the symplectic group and the Brauer algebra, and use tools from  Weingarten calculus to prove that Pauli measurements at the output of Haar random symplectic circuits can
converge to Gaussian processes. As a by-product, such analysis  provides us with concentration bounds for Pauli measurements in  circuits that form  $t$-designs over $\mathbb{SP}(d/2)$.  To finish, we present tensor-network tools to analyze shallow random symplectic circuits, and we use these to numerically show that computational-basis measurements  anti-concentrate at logarithmic depth. 
\end{abstract}

\maketitle

\section{Introduction}

The underlying mathematical structures behind the circuits implemented in the standard gate model of quantum computation are those of unitaries and groups. For instance, given an available set of implementable gates one can wonder what kind of interesting evolutions are available by their composition. Here, one can study \textit{specific} combinations of gates (creating a single unitary to solve a given problem), \textit{random} combinations (e.g., average properties as a function of the number of gates taken), or properties of \textit{all} possible combinations (what is the emerging group structure). 
 
The connection between quantum computing and group theory has led to the discovery of  universal gate sets capable of approximating any  evolution in $\mathbb{U}(d)$, the unitary group of dimension $d$~\cite{divincenzo1995two,barenco1995elementary,kitaev1997quantum,kitaev2002classical}. Moreover,  researchers have also studied  architectures that can only implement unitaries from a subgroup of $\mathbb{U}(d)$, such as circuits composed of gates from the Clifford group~\cite{gottesman1998heisenberg,bravyi2021hadamard}, or from some representation of a Lie group, like   matchgate circuits~\cite{de2013power,oszmaniec2017universal,wan2022matchgate,matos2022characterization,cherrat2023quantum,diaz2023parallel,diaz2023showcasing}, group-equivariant circuits~\cite{schatzki2022theoretical,larocca2021diagnosing,monbroussou2023trainability,kerenidis2021classical,jordan2010permutational,zheng2021speeding, zheng2022super} or circuits with translationally-invariant generators~\cite{wiersema2023classification}. The analysis of such architectures has led to insightful results on their classical simulability~\cite{aaronson2004improved,jozsa2008matchgates,anschuetz2022efficient,cerezo2023does,chen2021exponential,aharonov2022quantum,huang2021quantum}, their use in quantum machine learning~\cite{ragone2023unified,fontana2023theadjoint,nguyen2022atheory,larocca2021theory,skolik2022equivariant,meyer2022exploiting,ragone2022representation,larocca2024review}, and on how imposing locality in the generating gates can lead to failures to achieve (subgroup) universality~\cite{zimboras2015symmetry,marvian2022restrictions,marvian2022rotationally,marvian2021qudit,marvian2023non,kazi2023universality}.

In the previous context, the study of random quantum circuits has been particularly active~\cite{fisher2023random}.  These circuits exhibit the appealing feature of being analytically tractable, e.g., via Weingarten calculus~\cite{collins2006integration,collins2022weingarten,mele2023introduction}, providing a test-bed for quantum advantage in sampling problems~\cite{chen2021exponential,boixo2018characterizing,dalzell2021random,oszmaniec2022fermion,arute2019quantum,bouland2019complexity,kondo2022quantum,movassagh2023hardness} and for probing quantum many-body dynamics and the emergence of quantum chaos~\cite{oliveira2007generic,nahum2017quantum,nahum2018operator}. 
For instance, the convergence of random circuits to $t$-designs over $\mathbb{U}(d)$ and the appearance of the anti-concentration phenomenon 
have been the subject of numerous works~\cite{gross2007evenly,dankert2009exact,brandao2016local,harrow2018approximate,hunter2019unitary,haferkamp2023efficient,Haferkamp2022randomquantum,o2023explicit,haah2024efficient,dalzell2022randomquantum,hangleiter2018anticoncentration,braccia2024computing}. Crucially,  
the study of random circuits has been mainly  focused on the unitary group, with significantly less attention being payed to circuits sampled from Lie subgroups of $\mathbb{U}(d)$ (with some notable recent exceptions~\cite{oszmaniec2022fermion,haah2024efficient,braccia2024computing}).

In this work, we contribute to the body of knowledge of circuits that belong to subgroups of $\mathbb{U}(d)$ by studying quantum circuits implementing transformations from  the compact symplectic group $\SPBB(d/2)$ (see Fig.~\ref{fig:schematic}). This Lie group consists of all the $d\times d$ unitary symplectic matrices, which are unitaries that preserve a non-degenerate anti-symmetric bilinear form $\Omega$. Despite its importance in random matrix theory~\cite{mehta2004random}, and classical~\cite{goldstein2001classical} and quantum~\cite{ferraro2005gaussian} dynamics, this group has been mostly neglected in the recent literature. Let us recall that the compact symplectic group is the \emph{only} proper subgroup of the special unitary group $\mathbb{SU}(d)$ that exhibits pure-state controllability~\cite{schirmer2002criteria}. This means that any pure quantum state can be reached from any other pure quantum state by means of a symplectic unitary. This fact, together with the observation that the symplectic group is the key group describing the phase-space dynamics of both classical and quantum systems provides the necessary motivation to study symplectic quantum circuits (as they could be used to simulate such dynamics on a quantum computer).

We begin by discussing how the non-uniqueness of  $\Omega$ is a salient and important feature of $\SPBB(d/2)$ that is not present when studying circuits that implement evolutions from the unitary or orthogonal groups. This up-to-congruence freedom can be exploited to show that when the canonical form of $\Omega$ is used, we can find a  set of generators for the Lie algebra $\mathfrak{sp}(d/2)$ consisting of one- and two-qubit Paulis acting on neighboring sites in a one-dimensional lattice. This set of generators leads to quantum circuit architectures that implement symplectic transformations and that are universal in $\SPBB(d/2)$. Remarkably, these circuits cannot be built from translationally-invariant local generators~\cite{wiersema2023classification}. In fact, circuits built from locally-symplectic quantum gates do not necessarily produce globally-symplectic transformations, but instead span the entire special unitary group $\mathbb{SU}(d)$.  

After identifying how to produce symplectic evolutions on a quantum computer, we review the Schur-Weyl duality between the symplectic group and the Brauer algebra, showing that it can be used along with the Weingarten calculus~\cite{collins2006integration,collins2022weingarten} (which we present via tensor notation) to compute average properties of symplectic random circuits. In particular, we prove that the outputs of Haar random symplectic  circuits can converge in distribution to Gaussian Processes (GPs) when the measurement operator  is  traceless and involutory. This fact allows us to provide concentration bounds for Haar random symplectic circuits, and show that Pauli expectation values concentrate exponentially in the Hilbert space dimension. That is, doubly exponentially in the number of qubits. Furthermore, we give concentration bounds for random circuits that form $t$-designs over $\SPBB(d/2)$.  Finally, following the results in Ref.~\cite{braccia2024computing}, we  present tensor-network-based tools capable of analyzing average properties of shallow symplectic random circuits. Notably, we use these to numerically show  that computational-basis measurement appear to anti-concentrate at logarithmic-depth, indicating that these circuits may be used in quantum supremacy experiments~\cite{chen2021exponential,boixo2018characterizing,dalzell2021random,oszmaniec2022fermion,arute2019quantum,bouland2019complexity,kondo2022quantum,movassagh2023hardness}.

\section{Preliminaries}
\label{sec:preliminaries}

In this section, we introduce some  basic concepts that will be used throughout this work.
We begin by recalling that the standard representation of the compact symplectic group $\mathbb{SP}(d/2):=\mathbb{SP}(d;\mathbb{C})\cap \mathbb{SU}(d)$  consists of all $d\times d$ unitary matrices (with $d$ an even number), such that any $ S\in \mathbb{SP}(d/2)$  satisfies the relation
\begin{equation}\label{eq:symplectic-unitaries}
    S^{T} \Omega S = \Omega\,,
\end{equation}
where $\Omega$ is a  non-degenerate anti-symmetric bilinear form. In other words, $\mathbb{SP}(d/2)$ is the group of unitary matrices that preserve the product  $\vec{x}^T \Omega \,\vec{y}$ for vectors $\vec{x},\vec{y}\in\mathbb{C}^{d}$. Then, we recall that the Lie algebra associated with $\mathbb{SP}(d/2)$ is the symplectic Lie algebra,  denoted as $\mathfrak{sp}(d/2)$, whose  elements are $d\times d$ anti-Hermitian matrices, such that any $M\in \mathfrak{sp}(d/2)$  satisfies 
\begin{equation}\label{eq:symplectic-algebra}
    M^T\Omega=-\Omega M\,.
\end{equation}
Moreover, any orthogonal basis for  the symplectic Lie algebra is of dimension $d(d+1)/2$.

Here, we remark that $\Omega$ in Eqs.~\eqref{eq:symplectic-unitaries} and~\eqref{eq:symplectic-algebra}  is not uniquely defined. Typically, one uses  the Darboux basis---or canonical form-- in which  $\Omega$  takes the form 
\begin{equation} \label{eq:omega}
	\Omega=\begin{pmatrix} 0& \id_{d/2} \\ - \id_{d/2} & 0\end{pmatrix}\,,
\end{equation}
with $\id_{d/2}$ being the $d/2 \times d/2$ identity matrix. In this work we will assume that $\Omega$ is given by Eq.~\eqref{eq:omega}, as any other non-degenerate anti-symmetric bilinear form $\Omega'$ can always be mapped to $\Omega$ by a  change of basis $Q\, \Omega' Q^T=\Omega$, with a $d\times d$ orthogonal matrix $Q$ (i.e., such that $Q^T=Q^{-1}$). To finish, we recall that $\Omega$ has the elementary properties 
\begin{equation}
    \Omega^2=-\id_d\,,\quad \text{and} \quad\Omega\Omega^T=\Omega^T\Omega=\id_d\,.
\end{equation}

\begin{figure}[t]
    \centering
    \includegraphics[width=1\linewidth]{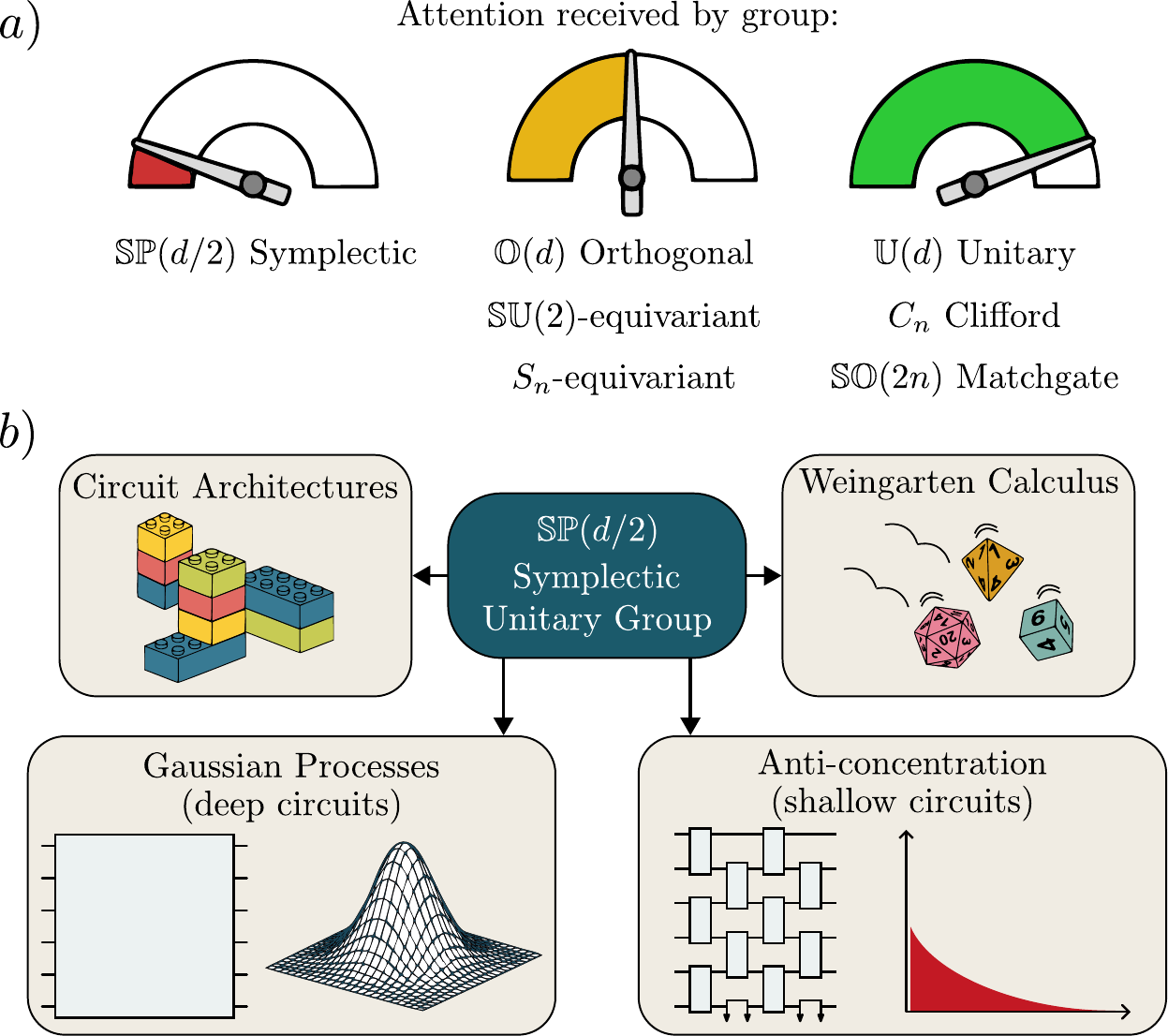}
    \caption{{\bf Schematic representation of our main results}. a) When compared against other groups, the compact  group $\mathbb{SP}(d/2)$ of $d\times d$ symplectic unitaries has received considerably less attention. b) Here we introduce tools to study $\mathbb{SP}(d/2)$, such as  presenting easy-to-implement circuit architectures capable of producing any symplectic evolution. We also review the Weingarten calculus for this group and use it to study properties of random symplectic circuits, like their convergence to Gaussian processes (deep circuits) or the emergence of anti-concentration (shallow circuits).    }
    \label{fig:schematic}
\end{figure}

\section{Pauli operator basis for the  symplectic Lie algebra}

Let us now  focus on the case when $d=2^n$ so that the symplectic unitaries act on the Hilbert space $\HC=(\mathbb{C}^2)^{\otimes n}$ of $n$ qubits. With this choice one can verify that
\begin{equation}\label{eq:omega-qubit}
    \Omega=i Y\otimes\id^{\otimes n-1}\,, 
\end{equation}
with $Y$ the Pauli matrix and $\id$ the $2\times 2$ identity. Here, we ask the following question: \textit{What is a natural choice for the basis elements of the standard representation  of the symplectic Lie algebra $\mathfrak{sp}(d/2)$?} As we prove in Appendix~\ref{ap:prop-1}, the following proposition holds.

\begin{proposition} \label{prop:sp-algebra}
     A basis for the standard representation of the $\mathfrak{sp}(d/2)$ algebra is 
     \begin{equation} \label{eq:sp-dla}
	   B_{\mathfrak{sp}(d/2)}\equiv i\{\{X,Y,Z\}\otimes P_s \}\,\cup\, i\{\id\otimes P_a\}\,,
    \end{equation}
    where $P_s$ and $P_a$ belong to the sets of arbitrary symmetric and anti-symmetric Pauli strings on $n-1$ qubits, respectively, and $\id,X,Y,Z$ are the usual $2\times 2$ Pauli matrices.
 \end{proposition}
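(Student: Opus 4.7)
The plan is to work directly from the defining condition $M^{T}\Omega=-\Omega M$, exploiting the fact that $\{iP\}_{P}$, with $P$ ranging over the $n$-qubit Pauli strings, is a complete orthogonal basis of anti-Hermitian matrices. I would simply determine which $iP$ lie in $\mathfrak{sp}(d/2)$ and then verify that the count of surviving elements matches the known dimension of the algebra.

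First I would substitute $M=iP$ and $\Omega=iY\otimes \id^{\otimes n-1}$ into $M^{T}\Omega+\Omega M=0$. Using $X^{T}=X$, $Z^{T}=Z$, $\id^{T}=\id$ and $Y^{T}=-Y$, every Pauli string obeys $P^{T}=(-1)^{n_Y(P)}P$, where $n_Y(P)$ counts the number of $Y$ factors in $P$. The symplectic condition then collapses to
\begin{equation}
(-1)^{n_Y(P)}\,P\bigl(Y\otimes \id^{\otimes n-1}\bigr)+\bigl(Y\otimes \id^{\otimes n-1}\bigr)P=0.
\end{equation}
Writing $P=P_{1}\otimes P'$, with $P_{1}\in\{\id,X,Y,Z\}$ acting on the first qubit and $P'$ on the remaining $n-1$ qubits, this factorizes as $\bigl[(-1)^{n_Y(P)}P_{1}Y+YP_{1}\bigr]\otimes P'=0$, reducing the problem to a single-qubit algebraic constraint.

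The heart of the argument is then a short case analysis on $P_{1}$ using the anti-commutation relations among Paulis. For $P_{1}\in\{X,Z\}$ one has $YP_{1}=-P_{1}Y$ and $n_Y(P_{1})=0$, which forces $(-1)^{n_Y(P')}=1$, i.e.\ $P'$ symmetric; for $P_{1}=Y$ the two terms are equal and $n_Y(P_{1})=1$ again forces $n_Y(P')$ even; for $P_{1}=\id$ the two terms are equal but $n_Y(P_{1})=0$, so $n_Y(P')$ must instead be odd, i.e.\ $P'$ anti-symmetric. These four cases reproduce exactly the two families in Eq.~\eqref{eq:sp-dla}. The only subtle part of the argument is this sign bookkeeping; once it is done correctly the characterization of the symplectic Pauli elements is immediate.

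To conclude, I would check that the proposed set really is a basis by a dimension count. Among the $4^{n-1}$ Pauli strings on $n-1$ qubits, the symmetric ones number $(4^{n-1}+2^{n-1})/2$ and the anti-symmetric ones $(4^{n-1}-2^{n-1})/2$, as follows from $\sum_{k\text{ even}}\binom{n-1}{k}3^{n-1-k}=\tfrac{1}{2}(4^{n-1}+2^{n-1})$. Summing $3\cdot(4^{n-1}+2^{n-1})/2$ contributions from $\{X,Y,Z\}\otimes P_{s}$ with $(4^{n-1}-2^{n-1})/2$ contributions from $\id\otimes P_{a}$ yields $2^{n-1}(2^{n}+1)=d(d+1)/2$, matching the dimension of $\mathfrak{sp}(d/2)$ recalled earlier. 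Linear independence is automatic from the orthogonality of the Pauli strings, so $B_{\mathfrak{sp}(d/2)}$ is indeed a basis and not merely a spanning family.
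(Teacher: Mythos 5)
Your proposal is correct and follows essentially the same route as the paper's proof: reduce the defining relation $M^{T}\Omega=-\Omega M$ to a (anti)commutation condition on Pauli strings using $P^{T}=(-1)^{n_Y(P)}P$, identify the two families $\{X,Y,Z\}\otimes P_s$ and $\id\otimes P_a$, and conclude by matching the count against $\dim\mathfrak{sp}(d/2)=d(d+1)/2$. Your explicit case analysis on the first-qubit factor and the binomial-sum evaluation of the symmetric/anti-symmetric counts are just slightly more detailed renderings of the same argument.
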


We recall that $\{P_s\}$ and $\{P_a\}$ are composed of all Paulis acting on $n-1$ qubits with an even or odd number of $Y$'s, respectively. It is interesting to note that Eqs.~\eqref{eq:omega-qubit} and~\eqref{eq:sp-dla} reveal that the first qubit plays a privileged role. As we will see below,  this asymmetry will translate into the structure of symplectic quantum circuits. In particular, it will be responsible for the lack of translational invariance in the generators of the circuit.

\begin{figure}[t]
    \centering
    \includegraphics[width=.9\linewidth]{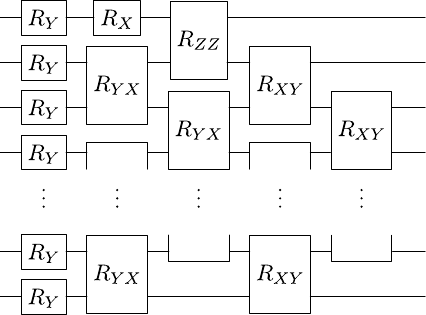}
    \caption{{\bf Quantum circuits for symplectic unitaries}. Example of the basic building block for the implementation of symplectic unitary transformations on a quantum computer. The notation $R_{H_l}$ stands for $e^{i\theta_l H_l}$, with independent $\theta_l$ angles in each gate. As stated in Theorem~\ref{th:symplectic-universal}, the Lie closure of the generators appearing in this circuit, which are not translationally invariant, produces   $\spf(d/2)$. This implies that any symplectic unitary from the $\SPBB(d/2)$ group can be implemented by a quantum circuit architecture consisting of blocks of this form.}
    \label{fig:global-sp-circuits}
\end{figure}

\section{quantum circuits for symplectic unitaries}

The fact that the matrices in $\mathbb{SP}(d/2)$ are unitary implies that they can be implemented by quantum circuits. While some architectures for such symplectic unitaries have been found~\cite{schirmer2002identification,zeier2011symmetry}, they do not make use of the canonical form of $\Omega$ and are composed of non-local gates obtained by either correlating parameters~\cite{zeier2011symmetry} or by using non-local generators~\cite{schirmer2002identification}. 

Our first contribution is to show that by taking $\Omega$ as in Eq.~\eqref{eq:omega-qubit}, we can find a set local generators $\GC$ for which circuits of the form 
\begin{equation} \label{eq:circuit}
    U= \prod_l e^{ i \theta_l H_l}\,,
\end{equation}
where $\theta_l$ are real-valued parameters and  $H_l\in \GC$, are universal and can therefore produce any unitary in $\mathbb{SP}(d/2)$. In particular, the following theorem, whose proof can be found in Appendix~\ref{ap:th-1}, holds. 
\begin{theorem} \label{th:symplectic-universal}
    The set of unitaries of the form in Eq.~\eqref{eq:circuit}, with  generators taken from
    \begin{equation}\label{eq:generators}
        \GC = \{Y_i\}_{i=1}^{n} \cup \{X_iY_{i+1}, Y_iX_{i+1}\}_{i=2}^{n-1} \cup X_1 \cup Z_1 Z_2\,,
    \end{equation} 
is universal in $\mathbb{SP}(d/2)$, as
    \begin{equation}
    \Span_{\mathbb{R}} \langle i \GC\rangle_{\rm Lie}=  \mathfrak{sp}(d/2)\,.
\end{equation}
Here, $\langle i\GC\rangle_{\rm Lie}$ is the Lie closure of $i\GC$, i.e., the set of operators obtained by the nested commutation of the elements in $i\GC$.
\end{theorem}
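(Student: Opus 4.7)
My plan is to establish the two inclusions $\Span_{\mathbb{R}}\langle i\GC\rangle_{\rm Lie} \subseteq \mathfrak{sp}(d/2)$ and $\Span_{\mathbb{R}}\langle i\GC\rangle_{\rm Lie} \supseteq \mathfrak{sp}(d/2)$ separately. The first is immediate from Proposition~\ref{prop:sp-algebra}: checking element by element, $iY_1$, $iX_1$, and $iZ_1Z_2$ are of the form $i\{X,Y,Z\}\otimes P_s$ with $P_s$ containing no $Y$ (hence symmetric), while $iY_i$, $iX_iY_{i+1}$, $iY_iX_{i+1}$ for $i\geq 2$ are of the form $i\id\otimes P_a$ with $P_a$ containing a single $Y$ (hence anti-symmetric). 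Since $\mathfrak{sp}(d/2)$ is closed under commutators, it contains the entire Lie closure.

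For the reverse inclusion I would proceed by induction on $n$, showing that every basis element from Proposition~\ref{prop:sp-algebra} arises as a nested commutator of elements of $i\GC$. The base case is $n=2$, where $\GC=\{Y_1,X_1,Y_2,Z_1Z_2\}$ and a short direct calculation recovers all ten basis vectors of the $10$-dimensional algebra $\mathfrak{sp}(2)$, e.g.\ $Z_1\propto [Y_1,X_1]$, $Y_1Z_2\propto [X_1,Z_1Z_2]$, $Y_1X_2\propto [Y_2,Y_1Z_2]$, and so on. For the inductive step, observe that the generators of $\GC$ restricted to sites $1,\dots,n-1$ coincide with $\GC_{n-1}$, so the hypothesis puts the full subalgebra $\mathfrak{sp}(2^{n-2})\otimes\id_n$ inside the closure. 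It then remains to show that the three new generators $Y_n$, $X_{n-1}Y_n$, and $Y_{n-1}X_n$ suffice to produce every remaining basis element carrying nontrivial support on qubit $n$. The strategy is threefold: (i) commute a generic $M\otimes\id_n\in\mathfrak{sp}(2^{n-2})\otimes\id_n$ with $X_{n-1}Y_n$ or $Y_{n-1}X_n$ to inject a Pauli at site $n$ while altering the Pauli at site $n-1$; (ii) cycle between $X_n$ and $Z_n$ on the last site via further commutators with $Y_n$; (iii) produce basis elements carrying $\id$ at site $n-1$ through double commutators such as $[[M,X_{n-1}Y_n],Y_{n-1}X_n]\propto M'\otimes\id_{n-1}\otimes Z_n$, where the second step uses $Y_{n-1}^2=\id$ to erase the action on site $n-1$.

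The main obstacle is the bookkeeping in the inductive step: each commutator must produce an element consistent with the $Y$-parity constraints of Proposition~\ref{prop:sp-algebra}, and every combination of Pauli types on sites $n-1$ and $n$ must be shown reachable. The delicate cases are those basis elements carrying $\id$ or $Y$ at site $n-1$ with nontrivial support on site $n$, which cannot arise from a single commutator of a seed $M\otimes\id_n$ with a chain generator. Verifying that two-step commutator sequences---together with commutation against the single-site $Y_{n-1}\in\GC$ available by induction---cover all such cases requires a systematic split according to the Pauli type at site $n-1$ of the seed $M$ and a careful check that the $Y$-count parities align. Once this case analysis is complete, the union of $\mathfrak{sp}(2^{n-2})\otimes\id_n$ with the newly generated elements exhausts the basis of Proposition~\ref{prop:sp-algebra} for $n$ qubits, closing the induction.
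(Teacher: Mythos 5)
Your forward inclusion and base case are sound: each generator in $i\GC$ is indeed one of the basis elements of Proposition~\ref{prop:sp-algebra}, and for $n=2$ the four generators $\{Y_1,X_1,Y_2,Z_1Z_2\}$ do reach all ten basis elements of $\mathfrak{sp}(2)$ via short commutator chains. Your induction is also organized differently from the paper's proof: the paper first builds the entire $\so(2^{n-1})$ sector (all $i\,\id\otimes P_a$) on qubits $2,\dots,n$ by adjoining one qubit at a time at the right end with exactly the generators $Y_k$, $X_{k-1}Y_k$, $Y_{k-1}X_k$, and only afterwards adjoins $X_1$ and $Z_1Z_2$ to sweep out the $i\{X,Y,Z\}\otimes P_s$ sector using the whole of $\so$ at once. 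You instead induct on the total qubit count with the full $\mathfrak{sp}(2^{n-2})\otimes\id_n$ as hypothesis, which forces you to regenerate both sectors (the $\id_1\otimes P_a$ elements and the $\tau_1\otimes P_s$ elements with nontrivial support on qubit $n$) within a single inductive step.

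The difficulty is that this inductive step---which is where essentially all of the content of the theorem lives---is described but not carried out. You outline a three-part strategy and then state that the delicate cases ``require a systematic split according to the Pauli type at site $n-1$ of the seed $M$ and a careful check that the $Y$-count parities align,'' deferring the verification; but that verification \emph{is} the proof. Concretely, one must exhibit, for every target $\tau_1\otimes Q\otimes\sigma_n$ (with $\tau\in\{\id,X,Y,Z\}$, $\sigma\in\{X,Y,Z\}$, and $Q$ acting on qubits $2,\dots,n-1$ with the $Y$-parity forced by Proposition~\ref{prop:sp-algebra}), a seed $M\otimes\id_n$ in the inductive subalgebra together with a commutator word in $Y_n$, $X_{n-1}Y_n$, $Y_{n-1}X_n$ that produces it; the cases with $\id$ or $Y$ at site $n-1$ each need an explicit multi-step chain (your $[[M'\otimes Z_{n-1},X_{n-1}Y_n],Y_{n-1}X_n]\propto M'\otimes\id_{n-1}\otimes Z_n$ handles one of them), and one must confirm that no parity class is unreachable. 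I see no obstruction to completing this---the required bookkeeping closely mirrors what the paper does for the $\so$ sector and for adjoining $X_1,Z_1Z_2$---but as written the proposal is an outline with the decisive computation left open, so it does not yet establish the reverse inclusion.
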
    
\noindent In Eq.~\eqref{eq:generators}, $X_i$, $Y_i$ and $Z_i$ denote the Pauli operators acting on the $i$-th qubit. 

Let us now discuss the implications of Theorem~\ref{th:symplectic-universal}. First, we note that the quantum circuits obtained from the set of generators in Eq.~\eqref{eq:generators} can be implemented with  one- and two-qubit  gates acting on nearest neighbors on a one-dimensional chain of qubits with open boundary conditions (see  Fig.~\ref{fig:global-sp-circuits}). Moreover,  each gate has an independent parameter. These features render the circuits readily implementable with the topologies and connectivities available in near-term quantum hardware. 

A second important implication of  Theorem~\ref{th:symplectic-universal} is that symplectic circuits are not translationally invariant in the sense that the local generators  $\{H_l\}$ are not the same on each pair of adjacent qubits.  This is in stark contrast with the unitary and orthogonal groups $\mathbb{U}(d)$ and $\mathbb{O}(d)$, as these can be constructed from translationally invariant generators~\cite{wiersema2023classification}. As mentioned in the previous section, the lack of translational invariance for the symplectic group can be traced back to the asymmetric structure of the $\Omega=iY\otimes\id^{\otimes n-1}$ matrix. 

In fact, we can see that to construct quantum circuits that implement  symplectic transformations in $\mathbb{SP}(d/2)$, one can choose local generators from the special orthogonal algebra $\mathfrak{so}(4)$  acting on the last $n-1$ qubits, such as $\{Y\otimes \id, \id\otimes Y, X\otimes Y, Y\otimes X\}$. The reason is that  $i\id\otimes P_a$ belongs to $B_{\spf(d/2)}$ for all anti-symmetric Paulis $P_a$ according to Eq.~\eqref{eq:sp-dla}, and this set is a basis for $\so(d/2)$. 
Then, in order to generate $\so(d/2)$ in the last $n-1$ qubits, it suffices to employ the local generators of $\so(4)$ on each pair of nearest neighbors in those $n-1$ qubits~\cite{wiersema2023classification}. Following this reasoning, we now need a different set of generators acting on the first pair of qubits.  It can be shown that adding operators from the $\spf(2)$ algebra such as $\{X\otimes \id,Y\otimes \id,Z\otimes Z\}$ acting on the first pair of qubits completely generates  $\spf(d/2)$, and nothing else (see Appendix~\ref{ap:th-1}).

To finish, we note that we have defined symplectic transformations with respect to the canonical form of the $\Omega$ matrix in Eq.~\eqref{eq:omega-qubit}. If we were to choose a different $\Omega$, there always exist an orthogonal change of basis that would take as back to the canonical form, as explained in Sec.~\ref{sec:preliminaries}. This would then correspond to global unitaries acting at the beginning and end of the circuit.

\begin{figure}[t]
    \centering
    \includegraphics[width=.9\linewidth]{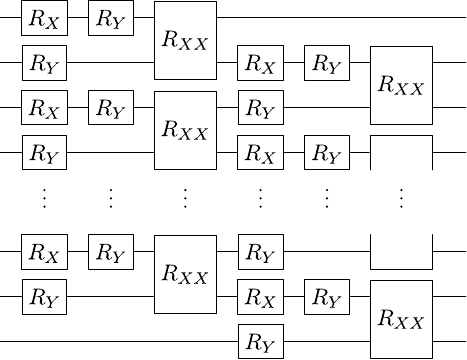}
    \caption{{\bf Circuits with local symplectic gates are not symplectic}. Example of a building block consisting of two-qubit symplectic gates.  This architecture is universal for quantum computation, as the Lie closure of all the generators leads to $\su(d)$, and thus any (special) unitary transformation can be decomposed into a circuit consisting of these gates.}
    \label{fig:local-sp-circuits}
\end{figure}

\section{Circuits with local symplectic gates are not symplectic}
\label{sec:local-symp-circs}

In the previous section we have shown that one can generate globally symplectic unitaries in $\mathbb{SP}(d/2)$ by implementing locally symplectic unitaries on the first two qubits, plus orthogonal unitaries acting on the second through last qubits. This raises the question as to what happens if we construct a circuit where all gates are locally symplectic (including those acting on the second through last qubits). For example, we can consider circuits such as those in Fig.~\ref{fig:local-sp-circuits}, where local gates from $\mathbb{SP}(2)$ are implemented on neighboring qubits on a  one-dimensional connectivity. 

Following Proposition~\ref{prop:sp-algebra}, one possible choice of  generators for the local gates that produce universal $\mathbb{SP}(2)$ circuits is $\{X\otimes \id ,Y\otimes \id, \id\otimes Y, X\otimes X\}$, since these suffice to generate all the basis $B_{\spf(2)}$ elements via (nested) commutation. Given that this set of  generators is  translationally invariant, it falls under the general classification of Ref.~\cite{wiersema2023classification}.
In particular, it is known that they produce unitary universal circuits (up to a global phase), that is, their Lie closure leads to  $\su(d)$. For the shake of completeness, we formalize this claim in the following proposition, proved in Appendix~\ref{ap:prop-2}.

\begin{proposition} \label{prop:symplectic-local}
    The set of unitaries of the form in Eq.~\eqref{eq:circuit}, with  generators taken from
    \begin{equation}\label{eq:generators-local}
        \GC_L = \bigcup_{i=1}^{n-1}\{X_i ,Y_i, Y_{i+1}, X_iX_{i+1}\}\,,
    \end{equation} 
is universal in $\mathbb{SU}(d)$, as
    \begin{equation}
    \Span_{\mathbb{R}} \langle i \GC_L\rangle_{\rm Lie}=  \mathfrak{su}(d)\,.
\end{equation}
\end{proposition}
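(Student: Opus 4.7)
The plan is to prove $\langle i\GC_L\rangle_{\rm Lie} = \mathfrak{su}(d)$ by matching upper and lower bounds. The upper bound $\langle i\GC_L\rangle_{\rm Lie} \subseteq \mathfrak{su}(d)$ is immediate since every generator in $\GC_L$ is a nonzero traceless Hermitian Pauli string, so $i\GC_L$ consists of traceless anti-Hermitian matrices. For the reverse inclusion, I would build up the closure in two stages, first on the first $n-1$ qubits and then extending to qubit $n$.

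In the first stage, observe that $X_i, Y_i \in \GC_L$ for $i = 1, \ldots, n-1$ immediately yield $Z_i = -\tfrac{i}{2}[X_i, Y_i]$, so the closure contains a full $\mathfrak{su}(2)$ on each such site. Combining this with the nearest-neighbor couplings $X_i X_{i+1}$ (present for $i = 1, \ldots, n-2$) and iterated commutators such as $[X_i X_{i+1}, Y_i] \propto Z_i X_{i+1}$, one recursively generates every non-identity Pauli string supported on qubits $1, \ldots, n-1$; this is a standard consequence of the classification of translationally invariant $2$-local Pauli generator sets in Ref.~\cite{wiersema2023classification}, giving $\mathfrak{su}(2^{n-1}) \otimes I_n \subseteq \langle i\GC_L\rangle_{\rm Lie}$. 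In the second stage, I would decompose the closure according to its qubit-$n$ Pauli content: for each $P \in \{I, X, Y, Z\}$, let $V_P$ denote the space of Pauli-linear combinations $V$ on qubits $1, \ldots, n-1$ such that $V \otimes P$ (times the appropriate factor of $i$) lies in the closure. The $\mathfrak{su}(2)$ bracket on qubit $n$ then enforces (i) $V_I$ to be a Lie subalgebra containing $\mathfrak{su}(2^{n-1})$ by Stage 1, (ii) each $V_P$ with $P \neq I$ to be a $V_I$-module under $\mathrm{ad}$, and (iii) $\{V_P, V_Q\} \subseteq V_R$ for distinct $P, Q, R \in \{X, Y, Z\}$ (coming from $[P,Q] \propto R$ on qubit $n$). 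Starting from the seeds $I \in V_Y$ (from the generator $Y_n$) and $X_{n-1} \in V_X$ (from $X_{n-1} X_n$), irreducibility of the adjoint representation of $\mathfrak{su}(2^{n-1})$ grows $V_X \supseteq \mathfrak{su}(2^{n-1})$, and then $V_Z \supseteq \{V_X, V_Y\}$ and $V_Y \supseteq \{V_X, V_Z\}$ reach $\mathfrak{su}(2^{n-1})$ as well.

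The main obstacle is showing that the identity operator $I_{2^{n-1}}$ ends up in each of $V_X, V_Y, V_Z$---this is precisely what produces the missing single-qubit operators $X_n$ and $Z_n$ in the closure. For $n \geq 3$, every non-identity Pauli on $n-1 \geq 2$ qubits admits a decomposition $P = AB$ as a product of two commuting non-identity Paulis (for example $X_1 = (X_1 X_2)(X_2)$), so the anti-commutator identity $\{A, B\} = 2AB$ together with $\{V_P, V_Q\} \subseteq V_R$ puts every such $P$ into the third sector; specializing to $A = B$ (so $A^2 = I$) places $2I$ there as well. Iterating this bootstrap across $V_X, V_Y, V_Z$ fills each sector with the full Hermitian operator space $\mathfrak{u}(2^{n-1})$, after which the dimension count $(4^{n-1} - 1) + 3 \cdot 4^{n-1} = d^2 - 1 = \dim \mathfrak{su}(d)$ forces equality. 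The argument genuinely requires $n \geq 3$: at $n = 2$ the seed structure is too impoverished and the closure collapses to $\mathfrak{sp}(2) \subsetneq \mathfrak{su}(4)$, consistent with the paper's main theme on symplectic structures.
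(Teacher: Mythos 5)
Your proposal is correct and, in its second stage, genuinely different from the paper's argument. Stage 1 (building $\mathfrak{su}(2^{n-1})$ on the first $n-1$ qubits from the single-qubit Paulis plus the nearest-neighbor $X_iX_{i+1}$ couplings) coincides with the paper's first step. For the extension to qubit $n$, the paper performs an explicit commutator chase: it first generates $\mathfrak{sp}(2)$ on the last pair, commutes with $\mathfrak{su}(2^{n-1})$ to obtain all $P\otimes\{X_n,Z_n\}$ with $P\neq\id$, and finally extracts the bare $Z_n$ from $[Y_{n-2}Y_{n-1}Y_n,\,Y_{n-2}Y_{n-1}X_n]$, after which two-qubit universality on the last pair finishes the job. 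You instead organize the closure by its qubit-$n$ Pauli grading, use irreducibility of the adjoint representation of $\mathfrak{su}(2^{n-1})$ to saturate the traceless part of each sector $V_X,V_Y,V_Z$, and then use the anticommutator relations $\{V_P,V_Q\}\subseteq V_R$ together with the factorization of any non-identity Pauli on $n-1\geq 2$ qubits into two commuting non-identity Paulis to inject the identity into each sector, closing with a dimension count. Your route is more structural and makes transparent \emph{why} the construction works, while the paper's is more elementary and self-contained; both are valid. A point in your favor: your explicit $n\geq 3$ caveat is warranted---for $n=2$ the set $\GC_L$ is exactly the $\mathbb{SP}(2)$ generating set discussed in the main text, whose closure is $\mathfrak{sp}(2)\subsetneq\mathfrak{su}(4)$, and your sector count ($3+3+3+1=10$) confirms this; the paper's proof silently assumes $n\geq 3$ (it references qubits $n-2$, $n-1$, $n$) without stating the restriction.
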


Ultimately, the expressive power of locally-symplectic circuits stems from the simple observation that the compact symplectic group is not amenable to the tensor product structure of the Hilbert space of qubits, in contrast to the orthogonal and unitary groups. More precisely, let $U,O$ be  unitary and  orthogonal matrices, respectively. Then, $\id_s \otimes U$ is unitary for any Hilbert space partition index $s$, and analogously for $O$. However, if $S$ is a  symplectic matrix, then $\id_s \otimes S$ is not symplectic in general. Indeed, taking $\Omega$ from Eq.~\eqref{eq:omega-qubit}, we have that  $\id\otimes S^T\, \Omega\, \id\otimes S=iY\otimes S^T S$, which is not equal to $\Omega$ unless $S$ is also orthogonal. This implies that local symplectic generators that do not belong to the special orthogonal Lie algebra (e.g.,  $X\otimes \id$ and $X\otimes X$) on the last $n-1$ qubits  are no longer in the symplectic algebra when tensored with identities on the rest of the qubits. Hence, it is clear that quantum circuits with locally-symplectic gates will be able to generate non-symplectic transformations.

\section{Symplectic Weingarten calculus}
\label{sec:weingarten}

\begin{figure*}[t]
    \centering
    \includegraphics[width=\linewidth]{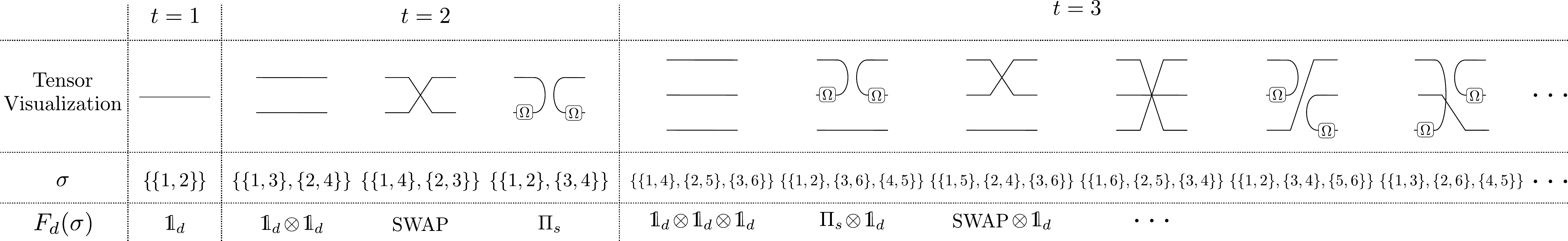}
    \caption{{\bf Elements of the Brauer algebra $\mathfrak{B}_t(-d)$.} Here we present all the elements of $\mathfrak{B}_t(-d)$ for $t=1$ and $t=2$, as well as some elements for $t=3$. In all cases we introduce their tensor representation visualization, their decomposition into disjoint pairs, and when convenient, their representation $F_d$. We use the convention whereby we first enumerate all the items on the left-hand side from top to bottom and then the items on the right-hand side (also from top to bottom). }
    \label{fig:brauer}
\end{figure*}

Now that we know how to construct quantum circuits that implement symplectic transformations, we turn to study their average properties. In particular, if we assume that the circuits that we implement sample unitaries according to the Haar measure over $\mathbb{SP}(d/2)$,  either exactly or approximately,  we can leverage the tools from the symplectic Weingarten calculus~\cite{collins2006integration,collins2022weingarten}. For ease of notation, we will also  use diagrammatic tensor notation to simplify computations. We refer the reader to Refs.~\cite{mele2023introduction,garcia2023deep} for an in-depth treatment of Weingarten calculus on the unitary and orthogonal groups from a quantum information perspective.

The goal of Weingarten calculus is to compute integrals of polynomials in the entries of matrices (and their complex conjugates) over the left-and-right-invariant Haar measure on a compact matrix Lie group $G$. This can be shown to be equivalent to computing matrix entries  of the following operator,
\begin{equation} \label{eq:twirl}
    \TC^{(t)}_{G}\left[ X\right] =\int_{G} d\mu(U)\, U^{\otimes t} X (U^\dagger)^{\otimes t} \,.
\end{equation}
Here,  $\TC^{(t)}_{G}\left[ X\right]$ is called the $t$-th fold twirl of $X$ over  $G$, $X$  belongs to the set of bounded operators $\BC(\HC^{\otimes t})$ acting on $\HC^{\otimes t}$, and $d\mu(U)$ is the Haar measure on $G$. Importantly, in practice one often encounters circuits that are not fully Haar random but that are sufficiently so to reproduce the first $t$ moments of the Haar random distribution, i.e., to match the twirl in Eq.~\eqref{eq:twirl}. These are called $t$-designs.

  It is straightforward to show that the twirl is an orthogonal projector onto the $t$-th order commutant $\comm$ of the tensor representation of $G$, that is, the vector subspace of all matrices that commute with $U^{\otimes t}$ for all $U\in G$. 
Hence, we can write
\begin{equation} \label{eq:twirl-expansion}
    \TC^{(t)}_{G}\left[ X\right] =\sum_{\mu,\nu} W^{-1}_{\mu\nu}\Tr[P_\nu\ad X]  P_\mu\,,
\end{equation}
where the $\{P_\mu\}$ operators are a basis that spans $\comm$ (note that they need not be orthonormal, nor Hermitian), and $W$  is the  Gram matrix of the aforementioned basis with respect to the Hilbert-Schmidt inner product, i.e., it is the matrix whose entries are $W_{\mu\nu}=\Tr[P_\mu^\dagger P_\nu]$. 
In summary, in order to compute $\twirl$, one needs to find a set of operators spanning $\comm$, compute the corresponding Gram matrix, and invert it (or in some cases perform the pseudo-inverse).

Perhaps the main ingredient necessary for using Eq.~\eqref{eq:twirl-expansion}, is the knowledge of a basis for $\comm$. While in some cases such basis might not be readily available, when $G$ is the standard representation of a unitary, orthogonal or symplectic group,  one can use 
the Schur-Weyl duality to obtain such basis. In particular, when $G$ is the  unitary group, then $\comm$  is found to be spanned by a representation of the symmetric group $S_t$~\cite{harrow2024approximate}, whereas if $G$ is the orthogonal or the symplectic group, its commutant is spanned by some representation of the Brauer algebra $\mathfrak{B}_t(\delta)$~\cite{collins2006integration} (with $\delta=d$ for the orthogonal group, and $\delta=-d$ for the symplectic group).

We recall that the Brauer algebra $\mathfrak{B}_t(\delta)$ is an associative algebra that has a basis consisting of all possible pairings of a set of size $2t$. That is, given a set of $2t$ items, the basis elements of the Brauer algebra correspond to all possible ways of splitting them into pairs. This has two important implications. First, we can see that all the permutations  in $S_t$ are also in $\mathfrak{B}_t(\delta)$, as these corresponds to the pairings that can only connect the first $t$ items to the remainder ones. Second, a straightforward calculation reveals that there are  $D_t=\frac{(2t)!}{2^t t!}=(2t-1)!!$  elements in the aforementioned basis of the Brauer algebra. Here we also note that  every basis element $\sigma\in\mathfrak{B}_t(\delta)$ can  be completely specified by $t$ disjoint pairs, as 
\begin{equation}
    \sigma=\{\{\lambda_1, \sigma(\lambda_1)\}\cup\dots\cup\{\lambda_t, \sigma(\lambda_t)\}\}\,.
\end{equation}
In Fig.~\ref{fig:brauer}, we diagrammatically show all the elements $\sigma\in\mathfrak{B}_t(-d)$ for $t=1,2$ (as well as some for $t=3$) using tensor representation. 
Additionally, a Brauer algebra $\mathfrak{B}_t(\delta)$ depends on a parameter $\delta$ and has the structure of a $\mathbb{Z}(\delta)$-algebra. This implies that when we multiply two basis elements in $\mathfrak{B}_t(\delta)$, we do not necessarily obtain a basis element from $\mathfrak{B}_t(\delta)$ but rather a basis element in $\mathfrak{B}_t(\delta)$ times an integer power of $\delta$. Diagrammatically, this means that when we connect (multiply) two diagrams, closed loops can appear. Then, the power to which the factor $\delta$ is raised is equal to the number of closed loops formed.

While the previous determines how the  abstract Brauer algebra is defined, we still need to specify how its elements are represented and how they act on $\HC^{\otimes t}$. In particular, we here consider the representation $F_d:\mathfrak{B}_t(-d)\rightarrow \BC(\HC^{\otimes t})$ such that 

\small
\begin{align} 
F_d(\sigma) = \sum_{i_1,\dots,i_{2t}=1}^{d}\prod_{\gamma=1}^{t}  &\Omega_{{\sigma(\lambda_\gamma)}}^{h(\lambda_\gamma,\sigma(\lambda_\gamma))}\ket{i_{t+1},i_{t+2},\dots,i_{2t}} \label{eq:rep-b_t}\\ &\times\bra{i_1,i_2,\dots,i_t} \,\Omega_{{\sigma(\lambda_\gamma)}}^{h(\lambda_\gamma,\sigma(\lambda_\gamma))}\delta_{i_{\lambda_\gamma}, i_{\sigma(\lambda_\gamma)}} \,,\nonumber
\end{align} 
\normalsize
where $h(\lambda_\gamma,\sigma(\lambda_\gamma))=1$ if $\lambda_\gamma,\sigma(\lambda_\gamma)\leq n$ or if $\lambda_\gamma,\sigma(\lambda_\gamma)> n$ and zero otherwise, and where $\Omega_{\sigma(\lambda_\gamma)}$ indicates that the $\Omega$ matrix acts on the  $\sigma(\lambda_\gamma)$-th copy of the Hilbert space.

Equipped with the previous knowledge, let us consider specific values of $t$. In each case, we will present the basis elements of $\mathfrak{B}_t(-d)$ as well as explicitly  compute the formula for the  twirl in Eq.~\eqref{eq:twirl-expansion}. First, we consider the case when $t=1$. As shown in Fig.~\ref{fig:brauer},  $\mathfrak{B}_1(-d)$ contains a single element $\{\{1,2\}\}$ whose representation is given by 
\begin{equation}
    F_d(\{\{1,2\}\})=\sum_{i_1=1}^d\ket{i_1}\bra{i_1}:=\id_d\,,
\end{equation}
which indeed confirms that the representation of $\mathbb{SP}(d/2)$ is irreducible (the only element in the $t=1$ commutant is the identity).  As such, we find
\begin{equation}
    W=\begin{pmatrix}
    d
    \end{pmatrix}\,,
\end{equation}
 and thus
\begin{align}\label{eq:twirl-1}
    \TC^{(1)}_{\mathbb{SP}(d/2)}[X]=\frac{\Tr[X]}{d}\,\id_d\,.
\end{align}
Then,  when $t=2$, $\mathfrak{B}_2(-d)$ contains three elements given by  $\{\{1,3\},\{2,4\}\}$, $\{\{1,4\},\{2,3\}\}$, and $\{\{1,2\},\{3,4\}\}$, whose representations are 
\small
    \begin{align}
    F_d(\{\{1,3\},\{2,4\}\})&=\sum_{i_1,i_2=1}^d\ket{i_1i_2}\bra{i_1i_2}:=\id_d\otimes \id_d\,,\nonumber\\
    F_d(\{\{1,4\},\{2,3\}\})&=\sum_{i_1,i_2=1}^d\ket{i_1i_2}\bra{i_2i_1}:={\rm SWAP}\,,\nonumber\\
    F_d(\{\{1,2\},\{3,4\}\})&=\sum_{i_1,i_2=1}^d\id_d\otimes\Omega\ket{i_1i_1}\bra{i_2i_2}\id_d\otimes\Omega:=\Pi_s\,.\nonumber
\end{align}
\normalsize
Recalling that the maximally-entangled Bell state $\ket{\Phi^+}$ between the two copies of $\HC$ is $\ket{\Phi^+}=\frac{1}{\sqrt{d}}\sum_{i_1=1}^{d}\ket{i,i}$, we find that $    \Pi_s= d (\id_d \otimes \Omega)  \,\ketbra{\Phi^+}{\Phi^+}(\id_d\otimes \Omega)$. The identification of $\Pi_s$ with the Bell state shows that $\Pi_s$ satisfies an analogous of the so-called ricochet property~\cite{nielsen2000quantum},
\begin{equation} \label{eq:ricochet} \begin{split}
    (A\otimes B)\Pi_s&=d \,\left(\id_d\otimes B\Omega A^T \right)\ketbra{\Phi^+}{\Phi^+}\,\id_d \otimes \Omega\\&=d \,\left(A\Omega^T B^T\otimes \id_d\right)\ketbra{\Phi^+}{\Phi^+}\, \id_d \otimes \Omega\,,
\end{split}
\end{equation}
which allows one to readily verify that $\Pi_s$ belongs to $\CC_{\SPBB(d/2)}^{(2)}$.
In this $t=2$ case,  the Gram matrix is
\begin{equation} \label{eq:gram-t2}
    W= \begin{pmatrix}
        d^2& d & -d \\ d & d^2 & d \\ -d & d & d^2   
    \end{pmatrix}\,,
\end{equation}
leading to the formula for the two-fold twirl, 
\footnotesize
\begin{align}  
    \TC^{(2)}_{\mathbb{SP}(d/2)}\left[X\right]&=\frac{(d-1)\Tr[X]  -\Tr[X{\rm SWAP}]  +\Tr[X \Pi_s]}{d(d+1)(d-2)}\,\id_d\otimes \id_d\nonumber\\
    &+\frac{-\Tr[X] + (d-1)\Tr[X{\rm SWAP}]  -\Tr[X \Pi_s]}{d(d+1)(d-2)} \,{\rm SWAP}\nonumber\\
    &+\frac{\Tr[X]  -\Tr[X{\rm SWAP}] + (d-1) \Tr[X \Pi_s]}{d(d+1)(d-2)}\,\Pi_s\,.\label{eq:twirl-2}
\end{align}
\normalsize
We refer to Fig.~\ref{fig:gram} for a visualization in tensor notation of how the elements of the Gram matrix~\eqref{eq:gram-t2} are computed.

Given that the dimension of $\mathfrak{B}_t(-d)$ is $D_t=(2t-1)!!$, keeping track of all the elements in the commutant of  $\SPBB(d/2)$ quickly becomes intractable as  $t$ grows.  However, we can derive asymptotic formulas that can be used to perform calculations in the large $d$ limit.  Here,  the key is to realize that the Gram matrix is given by
\begin{equation}\label{eq:gram-split}
    W = d^t\left(\id_{D_t}+\frac{1}{d} B\right)\,,
\end{equation}
with  $B$ a matrix whose  entries are in $\OC(1)$. We refer the reader to Appendix~\ref{ap:asympt} for additional details on why $W$ takes this form, but it should be noted that  there is a very close analogy between the approximate orthogonality of permutations in the $t$-fold commutant of the unitary group (see~\cite{harrow2024approximate} and~\cite{garcia2023deep}) and that of the generators of  the Brauer algebras in the orthogonal's and symplectic's commutants.

\begin{figure}[t]
    \centering    \includegraphics[width=0.9\columnwidth]{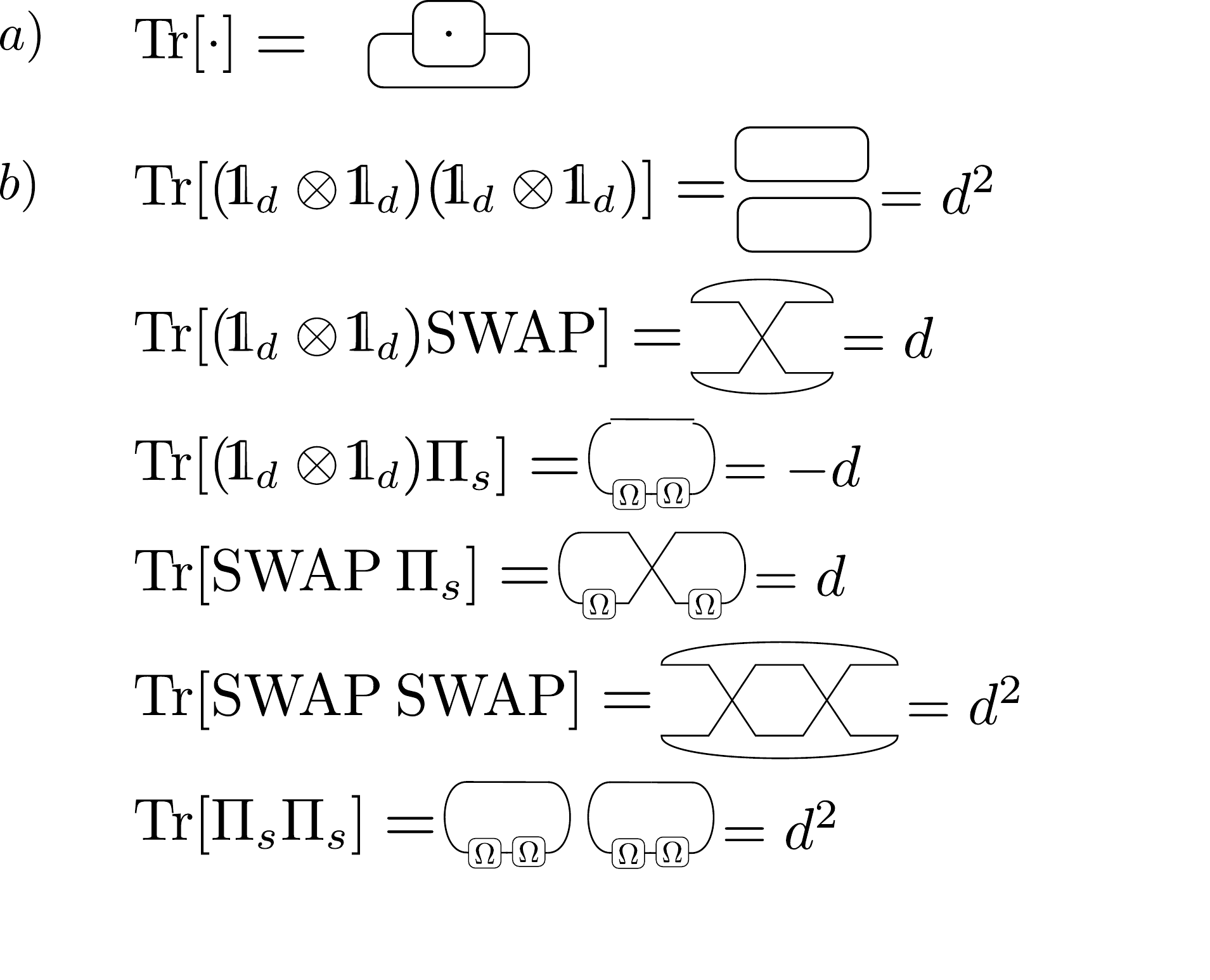}
    \caption{{\bf Computation of the Gram matrix elements}. a) Tensor representation of the trace operation. b)  We diagrammatically  show how to compute all the matrix entries of W for $t=2$, using tensor notation. The computation of these matrix elements leads to Eq.~\eqref{eq:gram-t2}.}
    \label{fig:gram}
\end{figure}

Equation~\eqref{eq:gram-split} allows us to write 
\begin{equation} \label{eq:W-matrix}
     W^{-1}=\frac{1}{d^t}\left(\id_{D_t}+C\right)\,,
\end{equation}
where the matrix entries of $C$ are in $\OC(1/d)$. To see this, suppose we write $B$ in a basis such that it is diagonal. Then, in this basis $W^{-1}$ is diagonal with entries $[W^{-1}]_{\mu\mu}=\frac{1}{d^t(1+\lambda_\mu/d)}=\frac{1}{d^t}(1-\frac{\lambda_\mu}{d+\lambda_\mu})$, so let us write it as $W^{-1}=\frac{1}{d^t}\left(\id_{D_t}+D\right)$. Since all the matrix entries of $B$ are in $\OC(1)$, so are its eigenvalues, i.e., $\lambda_\mu\in\OC(1)$ $\forall \mu$, which implies that the entries of $D$ are at most $\OC(1/d)$. Finally, $C$ and $D$ are related by a unitary change of basis, and therefore the matrix entries of $C$ are suppressed as $\OC(1/d)$. 
Once we have found  $W^{-1}$, all that is left is to evaluate Eq.~\eqref{eq:twirl-expansion}, which leads to
\small
\begin{equation} \label{eq:twirl-t}
\begin{split}
    \TC^{(t)}_{\mathbb{SP}(d/2)}\left[X\right]&=\frac{1}{d^t}\sum_{\sigma\in \brauer}\Tr\left[X F_d(\sigma^T)\right]\,F_d(\sigma)\\&+\frac{1}{d^t}\sum_{\sigma,\pi\in \brauer}c_{\pi,\sigma}\,\Tr\left[XF_d(\sigma^T)\right]\,F_d(\pi)\,,
\end{split}
\end{equation}
\normalsize
where the $c_{\pi,\sigma}$ are the matrix entries of $C$ in Eq.~\eqref{eq:W-matrix}, and thus are upper bounded as $\OC(1/d)$. Moreover, we here recall that given some $\sigma\in \brauer$, we define its transpose as $\sigma^T=\{\{\lambda_{1}+t, \sigma(\lambda_{1})+t\}\cup\dots\cup\{\lambda_{t}+t , \sigma(\lambda_{t})+t\}\}$, where the sum is taken mod $t$.
Note that if we fix $t$ and take the limit $d\rightarrow\infty$, the second sum in Eq.~\eqref{eq:twirl-t} gets asymptotically suppressed with the Hilbert space dimension $d$.

\section{Gaussian processes from random symplectic circuits}

Recently it has been shown that Pauli measurement outcomes at the output of Haar random circuits sampled from  $\mathbb{U}(d)$ or $\mathbb{O}(d)$~\cite{garcia2023deep,rad2023deep} (as well as the outputs of some shallow quantum neural networks~\cite{girardi2024trained}) converge in distribution to Gaussian Processes (GPs) under certain assumptions. In this section we will show that the asymptotic Weingarten tools previously presented can be used to prove that such phenomenon will also occur for Haar random symplectic quantum circuits.

\begin{figure}[t]
    \centering
    \includegraphics[trim={3.5cm 0.5cm 2.5cm 2cm},clip, width=0.45\textwidth]{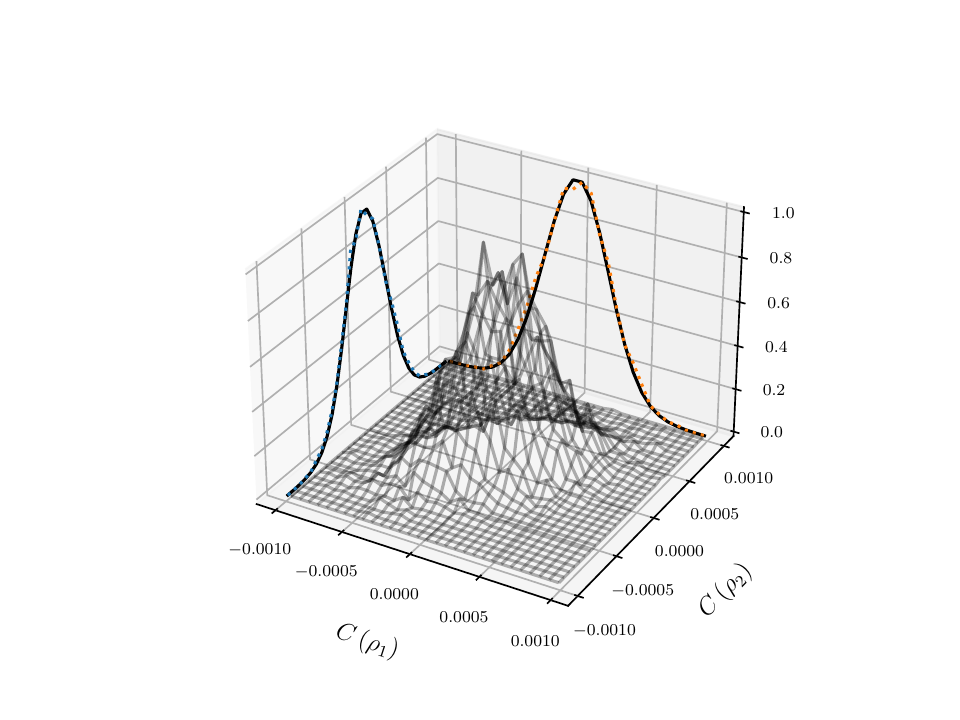}
    \caption{{\bf Random symplectic circuits form Gaussian processes}. We show the emergence of a Gaussian process for the set of pure states $\rho_1=\ketbra{\psi_1}{\psi_1}$ and $\rho_2=\ketbra{\psi_2}{\psi_2}$, with $\ket{\psi_1}=\ket{0}^{\otimes n}$ and $ \ket{\psi_2}=\frac{1}{\sqrt{2}}\left(\ket{0}^{\otimes n} + \ket{0}\ket{1}\ket{0}^{\otimes n-2}\right)$, where the number of qubits is $n=24$. The measurement operator is $O=Y_2$.     The statistics are obtained by sampling $10^4$ independent unitaries from $\mathbb{SP}(d/2)$. Together with the three-dimensional normalized histogram of sampled values of $C(\rho_j)$, we plot the separate samples counts for the two states considered (orange and blue scatter data) overlaid with the predicted Gaussian distribution.}
    \label{fig:symplectic-gp}
\end{figure}

In particular,  we will consider a setting where we are given a set  $\mathscr{D}=\{\rho_1,\dots,\rho_m\}$ of $n$-qubit quantum states on a $d$-dimensional Hilbert space. We then take the  $m$ states from $\mathscr{D}$ and  send them through a unitary $U$ which is sampled according to the Haar measure over $\mathbb{SP}(d/2)$. At the output of the circuit we measure the expectation value of a Pauli operator $O$ taken from $i\mathfrak{sp}(d/2)$~\footnote{Our results also hold if instead of a Pauli $O$ from $i\mathfrak{sp}(d/2)$ we take $O'=S O S^{\dagger}$, with $S$ an arbitrary matrix from $\SPBB(d/2)$.}. This leads to a set of quantities of the form 
 \begin{equation}\label{eq:ci}
    C(\rho_j)=\Tr[U\rho_j U\ad O]\,,
\end{equation}
which we collect in a length-$m$ vector
\begin{equation}\label{eq:random_vector} 
\mathscr{C}=\left(C(\rho_1),\ldots, C(\rho_m) \right) \,. 
\end{equation}

We will say that $\mathscr{C}$ forms a GP iff it follows a multivariate Gaussian  distribution, which we denote as $\NC(\vec{\mu},\vec{\Sigma})$.\footnote{Alternatively, we can also say that $\mathscr{C}$ forms a GP iff every linear combination of its entries  follows a univariate Gaussian distribution.}  We recall that a  multivariate Gaussian $\NC(\vec{\mu},\vec{\Sigma})$ is completely determined by its $m$-dimensional mean vector $\vec{\mu}=(\mathbb{E}[C(\rho_1)],\ldots,\mathbb{E}[C(\rho_m)])$, and its $m\times m$ dimensional covariance matrix with entries $\vec{\Sigma}_{jj'}={\rm Cov}[C(\rho_{j}),C(\rho_{j'})]$, as all  higher moments can be computed from $\vec{\mu}$ and $\vec{\Sigma}$ alone via Wick's theorem~\cite{isserlis1918formula}. Hence, in what follows we will determine conditions for which  $\mathscr{C}$ forms a GP, and report only its mean and its covariance matrix entries.

First, we can show that the following theorem holds (see Appendix~\ref{ap:th-2} for a proof).
\begin{theorem}\label{th:gauss-1}
    Let $\mathscr{C}$ be a vector of expectation values of the Hermitian operator $O$ over a set of states from $\mathscr{D}$, as in Eq.~\eqref{eq:random_vector}. If  $\Tr[\rho_j \rho_{j'}]\in\Omega\left(1/\poly(\log(d))\right)$ and $\Big|\Tr[\Omega\rho_j \Omega\rho_{j'}^T]\Big|\in o\left(1/\poly(\log(d))\right)$ $\forall j,j'$, then in the large $d$-limit $\mathscr{C}$ forms a GP with mean vector $\vec{\mu}=\vec{0}$ and covariance matrix
\begin{equation}\label{eq:covariance-gp1}
    \vec{\Sigma}_{j,j'} = \frac{\Tr[\rho_{j}\rho_{j'}]}{d} \,.
\end{equation}
\end{theorem}

Interestingly, we see here that the states for which Theorem~\ref{th:gauss-1} holds and $\mathscr{C}$ forms a GP are such that their inner products are at most polynomially vanishing with $n$. However, when we conjugate them by $\Omega = iY\otimes\id_{d/2}$, effectively leading to rotated states  $ \widetilde{\rho_j}= Y_1 \rho_j Y_1$ (up to a minus sign), then the inner products between the $\widetilde{\rho_j}$ and the original $\rho_{j'}^T$ are strictly smaller than  polynomially vanishing with $n$. We use precisely this condition to create a set $\mathscr{D}$ of such states in Fig.~\ref{fig:symplectic-gp}, where we show that the distribution indeed converges to a multivariate Gaussian with positive correlation. In particular, we there consider a system of $n=24$ qubits and sample $10^4$ independent unitaries from $\mathbb{SP}(d/2)$.\footnote{Sampling from $\mathbb{SP}(d/2)$ can be achieved by initializing a random $\frac{d}{2}\times\frac{d}{2}$ quaternionic matrix, mapping it to its complex $d\times d$ representation and performing a QR decomposition of the latter, as explained in \cite{mezzadri2006generate}.}

Next, we are also able to prove convergence to a GP in a different regime. Namely, when the overlaps between $\rho_j$ (as well as between its transformed version $\widetilde{\rho_j}$) and  $\rho_{j'}$ remain at most polynomially vanishing. This result is  stated in the next theorem, whose proof we present in Appendix~\ref{ap:th-3}.

\begin{theorem}\label{th:gauss-2}
    Let $\mathscr{C}$ be a vector of expectation values of the Hermitian operator $O$ over a set of states from $\mathscr{D}$, as in Eq.~\eqref{eq:random_vector}. If  $\Big|\Tr[\rho_j \rho_{j'}]+\Tr[\Omega\rho_j \Omega\rho_{j'}^T]\Big|\in\Omega\left(1/\poly(\log(d))\right)$  $\forall j,j'$, then in the large $d$-limit $\mathscr{C}$ forms a GP with mean vector $\vec{\mu}=\vec{0}$ and covariance matrix
\begin{equation}\label{eq:covariance-gp2}
    \vec{\Sigma}_{j,j'} = \frac{2\,\Tr_{\mathfrak{g}}[\rho_{j}\rho_{j'}]}{d} \,.
\end{equation}
\end{theorem}
Here, we defined $\Tr_{\mathfrak{g}}[\rho_j\rho_{j'}]$ as follows. Given that any quantum state can be written as $\rho=\frac{1}{d} \sum_k c_k P_k$, where the sum runs over all $d^2$ Pauli matrices (including the identity) and $-1\leq c_k\leq 1$ $\forall k$, it follows that 
\begin{equation}
    \rho= \frac{1}{d} \left(\sum_{k \,\backslash\,  P_k\in i\mathfrak{sp}(d/2)} c_k P_k +\sum_{k \,\backslash\,  P_k\notin i\mathfrak{sp}(d/2)} c_k P_k \right) \,, 
\end{equation}
where we separated $\rho$ into its algebra and out-of-the-algebra components, i.e., $\rho=\rho_{\mathfrak{g}} + \rho_{\overline{\mathfrak{g}}}$. Then, $\Tr_{\mathfrak{g}}[\rho_j\rho_{j'}]$ is simply the Hilbert-Schmidt product between the algebra components of $\rho_j$ and $\rho_{j'}$. For instance, 
\begin{equation} \label{eq:g-covariance}
    \Tr_{\mathfrak{g}}[\rho_j\rho_{j}]= \frac{1}{d}\sum_{k \,\backslash\,  P_k\in i\mathfrak{sp}(d/2)} c_k^2\,.
\end{equation}

Note that a crucial  difference between Eqs.~\eqref{eq:covariance-gp1} and~\eqref{eq:covariance-gp2} is that in the former all covariances must be positive (i.e., we have a positively correlated GP), whereas in the latter the covariances can be negative. Finally, we prove in Appendix~\ref{ap:th-4} that there exist states for which symplectic quantum circuits  form uncorrelated GPs.

\begin{theorem}\label{th:gauss-3}
    Let $\mathscr{C}$ be a vector of expectation values of the Hermitian operator $O$ over a set of states from $\mathscr{D}$, as in Eq.~\eqref{eq:random_vector}. If  $\Tr[\rho_j^2]+\Tr[\Omega\rho_j \Omega\rho_j^T]\in\Omega\left(1/\poly(\log(d))\right)$  $\forall j$ and $\Tr[\rho_j \rho_{j'}]=-\Tr[\Omega\rho_j \Omega\rho_{j'}^T]$ $\forall j\neq j'$, then in the large $d$-limit $\mathscr{C}$ forms a GP with mean vector $\vec{\mu}=\vec{0}$ and diagonal covariance matrix
\begin{equation}\label{eq:covariance-gp3}
    \vec{\Sigma}_{j,j'} = \begin{cases}
        \frac{2\,\Tr_{\mathfrak{g}}[\rho_{j}^2]}{d} \quad &{\rm if\; } j=j' \\
         \; 0 \qquad \quad\;\,\, &{\rm if\; } j\neq j'
    \end{cases} \,.
\end{equation}
\end{theorem}

Let us remark that to understand the technical conditions in Theorems~\ref{th:gauss-2} and~\ref{th:gauss-3}, we can use the following equation (proven in Appendix~\ref{ap:th-2}),
\begin{equation} 
    \Tr[\rho_j \rho_{j'}]+\Tr[\Omega\rho_j \Omega \rho_{j'}^T ] = 2 \,\Tr_{\mathfrak{g}}[\rho_j \rho_{j'}] \,.
\end{equation} 
It follows that in order for Theorem~\ref{th:gauss-2} to hold, we require that the (absolute value of the) inner products between the algebra components of $\rho_j$ and $\rho_j'$ are at most polynomially vanishing in $\log(d)$ for all $j,j'$.
Analogously, in Theorem~\ref{th:gauss-3}, the condition is that the algebra components of $\rho_j$ are at most polynomially vanishing in $\log(d)\;\forall j$, while the inner product of the algebra components of $\rho_j$ and $\rho_j'$ is exactly zero $\forall j\neq j'$.

\section{Concentration of measure in symplectic circuits}

In this section, we show that we can leverage the knowledge of the exact output distribution of random symplectic quantum circuits to characterize the concentration of measure phenomenon in these circuits. In particular, we provide concentration bounds for circuits that are sampled from the Haar measure on the symplectic group, and also for circuits that form $t$-designs over $\SPBB(d/2)$. In the case of Haar random symplectic circuits we compute tail probabilities to obtain the desired bound. For symplectic $t$-designs, we use an extension of Chebyshev's inequality to arbitrary moments. 
Our first result is:

\begin{corollary}\label{cor:double}
    Let $C(\rho_j)$ be the expectation value of a Haar random symplectic quantum circuit as in Eq.~\eqref{eq:ci}. If the conditions under which Theorems~\ref{th:gauss-1}, ~\ref{th:gauss-2} and~\ref{th:gauss-3} hold are satisfied, then 
    \begin{equation}
      {\rm Pr}(|C(\rho_j)|\geq c)\in\OC\left(\frac{\Tr_\g\left[\rho_j^2\right]}{c\sqrt{d} e^{dc^2 / (4\,\Tr_\g[\rho_j^2])}}\right)\,.
    \end{equation}
\end{corollary}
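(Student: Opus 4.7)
The plan is to directly reduce the corollary to a standard Gaussian tail bound, since Theorems~\ref{th:gauss-1}--\ref{th:gauss-3} already carry the substantive content---namely that $C(\rho_j)$ is asymptotically a centered Gaussian. Once this is in hand, the remaining task is a purely analytic tail estimate for a one-dimensional normal random variable.

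Concretely, I would proceed as follows. First, isolate the one-dimensional marginal: any coordinate of the multivariate Gaussian $\mathscr{C}$ provided by Theorems~\ref{th:gauss-1}--\ref{th:gauss-3} is itself a centered Gaussian $C(\rho_j)\sim \NC(0,\sigma_j^2)$, with $\sigma_j^2=\vec{\Sigma}_{j,j}$ read off from the covariance matrix of the applicable theorem. In the setting of Theorems~\ref{th:gauss-2} and~\ref{th:gauss-3} this gives $\sigma_j^2 = 2\Tr_\g[\rho_j^2]/d$ directly; in the setting of Theorem~\ref{th:gauss-1} the variance is $\Tr[\rho_j^2]/d$, which under the Theorem~\ref{th:gauss-1} hypothesis $|\Tr[\Omega\rho_j\Omega\rho_j]|\in o(1/\poly(\log d))$ is of the same order as $2\Tr_\g[\rho_j^2]/d$ up to a subleading correction. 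Second, apply the Mills-ratio inequality
\begin{equation}
    {\rm Pr}(|X|\geq c) \leq \sqrt{\tfrac{2}{\pi}}\,\frac{\sigma}{c}\,e^{-c^2/(2\sigma^2)}
\end{equation}
for $X\sim \NC(0,\sigma^2)$ and $c>0$, which follows by a one-line integration by parts on the Gaussian density. Substituting $\sigma^2 = 2\Tr_\g[\rho_j^2]/d$ reshapes the exponent into $dc^2/(4\Tr_\g[\rho_j^2])$, while the prefactor lands, up to constants, in the $\OC$-class of $\Tr_\g[\rho_j^2]/(c\sqrt{d})$ given the assumed scaling of $\Tr_\g[\rho_j^2]$.

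The derivation itself is routine, so the main obstacle worth flagging is packaging the three regimes of Theorems~\ref{th:gauss-1}--\ref{th:gauss-3} under the single closed-form bound in the corollary. This requires checking that in the Theorem~\ref{th:gauss-1} regime the variance $\Tr[\rho_j^2]/d$ agrees with $2\Tr_\g[\rho_j^2]/d$ up to asymptotically negligible terms---which follows because the non-algebra contribution to $\rho_j$ is controlled by $\Tr[\Omega\rho_j\Omega\rho_j]$ and is forced to be subleading by the hypotheses---so that the same closed-form Gaussian tail estimate governs all three cases and can be stated uniformly in the $\OC$-form claimed.
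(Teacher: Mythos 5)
Your proposal is correct and follows essentially the same route as the paper: identify the marginal of the GP as a centered Gaussian with variance $\sigma^2=2\Tr_\g[\rho_j^2]/d$ and then apply the standard Gaussian tail estimate (the paper phrases it via ${\rm Erfc}[x]\leq e^{-x^2}/(x\sqrt{\pi})$, which is the same Mills-ratio bound you invoke). Your extra check that the Theorem~\ref{th:gauss-1} variance $\Tr[\rho_j^2]/d$ coincides with $2\Tr_\g[\rho_j^2]/d$ up to the subleading $\Tr[\Omega\rho_j\Omega\rho_j]$ correction is a point the paper handles implicitly by using the exact covariance formula, so no gap there.
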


This corollary, proven in Appendix~\ref{ap:cor-1}, shows that Haar random symplectic processes concentrate exponentially in the Hilbert space dimension. That is, doubly-exponentially in the number of qubits. This feature is analogous to that encountered in random unitary and orthogonal quantum circuits~\cite{popescu2006entanglement,garcia2023deep}. Intuitively, we can understand this result from the fact that the probability density function of a Gaussian distribution decreases exponentially with $1/ \sigma^2$, and here $\sigma^2$ is itself exponentially decreasing with the number of qubits (see Eq.~\eqref{eq:covariance-gp1}). We also remark that the smaller the component of $\rho_j$ in the algebra, i.e., the smaller $\Tr_\g[\rho_j^2]$ is, the more concentrated $C(\rho_j)$ becomes, in agreement with the results in Refs.~\cite{fontana2023theadjoint,ragone2023unified}.

This extreme concentration of measure comes at the expense of the exponential (in the number of qubits) time or depth that is required to obtain a truly Haar random circuit~\cite{knill1995approximation}. In practice, however, one often encounters circuits that are not fully random but instead form $t$-designs. Therefore, if $U$ forms a $t$-design over $\mathbb{SP}(d/2)$, we can provide tight concentration bounds for the circuit outputs (see Appendix~\ref{ap:cor-2} for a proof).

\begin{corollary} \label{cor:t-designs}
Let $C(\rho_j)$ be the expectation value of a quantum circuit that forms a $t$-design over $\SPBB(d/2)$ as in Eq.~\eqref{eq:ci}. If the conditions under which Theorems~\ref{th:gauss-1}, ~\ref{th:gauss-2} and~\ref{th:gauss-3} hold are satisfied, then 
\small
\begin{equation} \label{eq:concentration}
    {\rm Pr}(|C(\rho_j)|\geq c)\in\OC\left(\left(2\left\lfloor t/2\right\rfloor-1\right)!!\, \left(\frac{2\,\Tr_\g\left[\rho_j^2\right]}{d c^2}\right)^{\left\lfloor t/2\right\rfloor }\right)\,.
\end{equation}
\normalsize
\end{corollary}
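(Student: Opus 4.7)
The plan is to combine the higher-moment Markov inequality with the exact $t$-design evaluation of low-order moments, and then to recognize the resulting Weingarten sum as the $2k$-th moment of the limiting Gaussian of Theorems~\ref{th:gauss-1}--\ref{th:gauss-3}. For any positive integer $k$, Markov's inequality applied to $C(\rho_j)^{2k}\geq 0$ gives
\begin{equation*}
\Pr\bigl(|C(\rho_j)|\geq c\bigr)\;\leq\;\frac{\mathbb{E}\bigl[C(\rho_j)^{2k}\bigr]}{c^{2k}}\,.
\end{equation*}
Setting $k=\lfloor t/2\rfloor$ ensures that the numerator is a balanced polynomial of degree $2k\leq t$ in the entries of $U$ and $U^\dagger$, so the $t$-design property allows us to replace this expectation by its Haar counterpart exactly, with no error.

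Next I would expand the Haar moment using the $2k$-fold twirl of Eq.~\eqref{eq:twirl-t}, writing $\mathbb{E}[C(\rho_j)^{2k}] = \Tr\bigl[\TC^{(2k)}_{\SPBB(d/2)}[\rho_j^{\otimes 2k}]\,O^{\otimes 2k}\bigr]$. The leading $\id_{D_{2k}}/d^{2k}$ part of $W^{-1}$ produces a sum over pair partitions $\pi\in\mathfrak{B}_{2k}(-d)$, each of which contracts the $2k$ copies of $\rho_j$ and of $O$ via identities or $\Omega$-ricochets (Eq.~\eqref{eq:ricochet}); under the hypotheses of Theorems~\ref{th:gauss-1}--\ref{th:gauss-3} these contractions reduce to products of the variance $v=2\Tr_\g[\rho_j^2]/d$ already identified there, while the remaining terms are suppressed by at least one extra power of $1/d$ coming from the $C$ matrix of Eq.~\eqref{eq:W-matrix}. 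By Wick's theorem the dominant contribution is exactly the $(2k-1)!!$ Isserlis pairings of a centered Gaussian of variance $v$, so that $\mathbb{E}[C(\rho_j)^{2k}] = (2k-1)!!\,v^k + \OC(v^k/d)$. Dividing by $c^{2k}$ and substituting $k=\lfloor t/2\rfloor$ produces the claimed bound.

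The main obstacle is the combinatorial bookkeeping needed to verify that, at leading order in $1/d$, the pair partitions of $\mathfrak{B}_{2k}(-d)$ contribute the $(2k-1)!!$ Gaussian Wick terms and no spurious extras. This is essentially the same $1/d$-expansion carried out in the proofs of the Gaussian-process theorems, now executed at finite $k$ rather than as a formal large-$d$ limit, and it relies on the same two hypotheses that ensure bridging $\Omega$-contractions either vanish or merge with the identity contractions at the appropriate rate. Once this identification is in hand, no further analytic input is required: the $t$-design condition does all the probabilistic work and the higher-moment Markov inequality closes the argument.
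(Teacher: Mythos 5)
Your proposal is correct and follows essentially the same route as the paper: apply the higher-moment Markov/Chebyshev inequality at order $2\lfloor t/2\rfloor$, use the $t$-design property to replace that even moment by its Haar value, and identify the Haar value with the Gaussian moment $(2\lfloor t/2\rfloor-1)!!\,\sigma^{2\lfloor t/2\rfloor}$ with $\sigma^2=2\Tr_\g[\rho_j^2]/d$. The only cosmetic difference is that you re-derive the Gaussian moment from the Weingarten expansion, whereas the paper simply cites the moment computation already established in the proofs of Theorems~\ref{th:gauss-1}--\ref{th:gauss-3}.
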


This result implies that symplectic $2$-designs will concentrate as $\OC(1/d)$, $4$-designs as $\OC(1/d^2)$, $6$-designs as $\OC(1/d^3)$, etc.
Note that for $t=2$ the bound in Eq.~\eqref{eq:concentration} is analogous to the known concentration result for unitary $2$-designs (i.e., barren plateaus~\cite{mcclean2018barren,cerezo2020cost,larocca2024review}).

\section{Anti-concentration in symplectic circuits}

Let us now  study the emergence of anti-concentration in symplectic quantum circuits that form $2$-designs over the $\SPBB(d/2)$ group. Anti-concentration roughly refers to the property that the output probabilities after measuring in the computational basis are not concentrated in a small subset of bit-strings~\cite{dalzell2022randomquantum,hangleiter2018anticoncentration}. More precisely, we say that quantum circuits sampled from a measure $\mu$ (e.g., the Haar measure) on a set of unitaries exhibit anti-concentration when there exist constants $\alpha,\beta>0$ such that
\begin{equation} \label{eq:anti-concentration}
    {\rm Pr}_{U\sim \mu}\left(\left|\bra{x} U \ket{0}^{\otimes n}\right|^2 \geq \frac{\alpha}{d}\right) > \beta\,,
\end{equation}
for all computational-basis states $\ket{x}$, with $x\in\{0,1\}^n$. That is, the probability that for a unitary $U$ sampled from $\mu$, all bit-string probabilities are at most a non-zero constant factor away from the uniform distribution, is lower-bounded by a positive constant.
Anti-concentration has been shown to be a very important property, as it is a necessary condition for the hardness of classical simulation in random circuit sampling~\cite{chen2021exponential,boixo2018characterizing,dalzell2021random,oszmaniec2022fermion,arute2019quantum,bouland2019complexity,kondo2022quantum,movassagh2023hardness}. Here, we prove that Haar random symplectic circuits and circuit ensembles that form symplectic $2$-designs anti-concentrate, as stated in the following theorem.

\begin{theorem} \label{th:anti-con} Let $\mu$ be the Haar measure on $\SPBB(d/2)$, or a measure giving rise to a $2$-design over $\SPBB(d/2)$. Then,
    \begin{equation}
        {\rm Pr}_{U\sim \mu}\left(\left|\bra{x} U \ket{0}^{\otimes n}\right|^2 \geq \frac{\alpha}{d}\right) \geq \frac{(1-\alpha)^2}{2}  \,,
    \end{equation}
for $0\leq\alpha\leq 1$.
\end{theorem}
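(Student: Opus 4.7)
The plan is to prove the bound by applying the Paley-Zygmund inequality to the non-negative random variable $Z=|\bra{x}U\ket{0}^{\otimes n}|^2$. Recall that for every $Z\geq 0$ with finite second moment and every $0\leq \theta\leq 1$,
\[
\Pr\bigl(Z\geq \theta\,\mathbb{E}[Z]\bigr)\geq (1-\theta)^2\,\frac{\mathbb{E}[Z]^2}{\mathbb{E}[Z^2]}\,.
\]
Setting $\theta=\alpha$, it suffices to show $\mathbb{E}[Z]=1/d$ and $\mathbb{E}[Z^2]=2/(d(d+1))$, since then $\mathbb{E}[Z]^2/\mathbb{E}[Z^2]=(d+1)/(2d)\geq 1/2$ and the claimed bound $(1-\alpha)^2/2$ follows. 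Because $Z$ and $Z^2$ depend only on $U$ and $U^{\otimes 2}$ respectively, the argument applies verbatim both to $U$ Haar-random in $\SPBB(d/2)$ and to any $U$ drawn from a symplectic $2$-design.

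To compute the moments, I write $Z=\Tr[U\rho_0 U^\dagger P_x]$ and $Z^2=\Tr[U^{\otimes 2}\rho_0^{\otimes 2}U^{\dagger\otimes 2}P_x^{\otimes 2}]$ with $\rho_0=\ket{0}^{\otimes n}\bra{0}^{\otimes n}$ and $P_x=\ket{x}\bra{x}$. The first moment follows immediately from Eq.~\eqref{eq:twirl-1}: $\mathbb{E}[Z]=\Tr[\TC^{(1)}_{\SPBB(d/2)}[\rho_0]\,P_x]=\Tr[P_x]/d=1/d$. The second moment is $\mathbb{E}[Z^2]=\Tr[\TC^{(2)}_{\SPBB(d/2)}[\rho_0^{\otimes 2}]\,P_x^{\otimes 2}]$, and Eq.~\eqref{eq:twirl-2} reduces it to linear combinations of the scalars $\Tr[\rho_0^{\otimes 2}]=1$, $\Tr[\rho_0^{\otimes 2}\,\mathrm{SWAP}]=\Tr[\rho_0^2]=1$, $\Tr[\rho_0^{\otimes 2}\Pi_s]$, together with the analogous three obtained by replacing $\rho_0^{\otimes 2}$ by $P_x^{\otimes 2}$.

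The only place where the analysis could depart from the familiar unitary calculation is the $\Pi_s$ contribution. Using the ricochet identity~\eqref{eq:ricochet}, these contributions reduce to $\Tr[\rho_0\,\Omega\rho_0\Omega]$ and $(\bra{x}\Omega\ket{x})^2$. Since in the Darboux basis we work in, $\Omega=iY\otimes\id^{\otimes n-1}$ acts as a bit-flip on the first qubit, it is strictly off-diagonal in the computational basis; hence both quantities vanish for every bit-string $x$. Substituting $\Tr[X]=\Tr[X\,\mathrm{SWAP}]=1$ and $\Tr[X\Pi_s]=0$ into Eq.~\eqref{eq:twirl-2} yields $\TC^{(2)}_{\SPBB(d/2)}[\rho_0^{\otimes 2}]=(\id\otimes\id+\mathrm{SWAP})/(d(d+1))$, which pairs with $P_x^{\otimes 2}$ to give $\mathbb{E}[Z^2]=2/(d(d+1))$ and completes the proof. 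The main subtlety is therefore recognizing that the symplectic-specific $\Pi_s$ term decouples from the computational basis statistics; after this observation, the bound becomes formally identical to the one derived for Haar-random unitary circuits.
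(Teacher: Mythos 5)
Your proposal is correct and follows essentially the same route as the paper's Appendix proof: Paley--Zygmund applied to $\mathscr{Z}=\Tr[U\rho_0U^\dagger\Pi_x]$, with the first and second moments evaluated via the $t=1$ and $t=2$ symplectic twirl formulas, and the key observation that the $\Pi_s$ contributions vanish. Your way of seeing that vanishing (ricochet plus the fact that $\Omega=iY\otimes\id^{\otimes n-1}$ is off-diagonal in the computational basis, so $\bra{0\cdots0}\Omega\ket{0\cdots0}=\bra{x}\Omega\ket{x}=0$) is a slightly more direct shortcut than the paper's Pauli-basis expansion of $\rho_0$, but it is the same argument in substance.
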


A detailed proof of this theorem can be found in Appendix~\ref{ap:th-5}. We note that the anti-concentration result can also be understood from the so-called collision probability~\cite{dalzell2022randomquantum}, defined as
\begin{equation}\label{eq:colision}
    z = \mathbb{E}_{U\sim \mu}\left[\sum_{x \in \{0,1\}^{\otimes n}} p_U(x)^2\right]=2^n\mathbb{E}_{U\sim \mu}\left[p_U(x)^2\right]\,,
\end{equation}
where $p_U(x) = \left|\bra{x} U \ket{0}^{\otimes n}\right|^2$, which can be found to be equal to 
\begin{equation}
    z_H = \frac{2}{d + 1}\,,
\end{equation}
when $\mu$ is the Haar measure over $\SPBB(d/2)$. Hence, we can see that the probability measurements are indeed at most a non-zero constant factor away from the uniform distribution (for which $z=1/d$).

\section{Shallow locally random  symplectic circuits}\label{sec:symp-anticoncentration}

\begin{figure}[t]
    \centering
    \includegraphics[width=.25\textwidth]{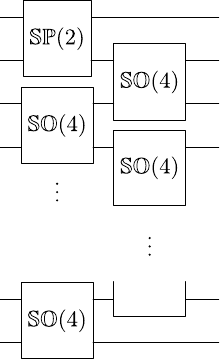}
    \caption{{\bf Shallow locally random  symplectic circuits}. Architecture employed to study the properties of shallow random symplectic circuits. Here, the notation $\SPBB(2)$ ($\mathbb{SO}(4)$) implies that the unitary is sampled from the Haar measure over $\SPBB(2)$ ($\mathbb{SO}(4)$). The gates shown in the picture, arranged in a brick-layered fashion and acting on neighboring qubits in a one-dimensional lattice, define one layer for the numerical experiments presented in Sec.~\ref{sec:symp-anticoncentration}. }
    \label{fig:local-random-circ}
\end{figure}

In the previous sections we have discussed tools to work with random quantum circuits that are Haar random, or that form a $t$-design over $\mathbb{SP}(d/2)$. However, one might be interested in studying the properties of shallow random circuits sampled, according to some measure $dU$, from some set of unitaries  $\SC\subseteq \mathbb{SP}(d/2)$. Here, one needs to evaluate $t$-th order twirls such as
\begin{equation} \label{eq:twirl-set}
    \TC^{(t)}_{\SC}\left[ X\right] =\int_{\SC} dU\, U^{\otimes t} X (U^\dagger)^{\otimes t} \,,
\end{equation}
or concomitantly, we need to compute the $t$-th moment operator
\begin{equation} \label{eq:moment-set}
    \widehat{\TC}^{(t)}_{\SC} =\int_{\SC} dU\, U^{\otimes t} \otimes (U^*)^{\otimes t} \,.
\end{equation}
Now, given that $\SC$ need not be a group, one cannot directly leverage the Weingarten calculus to evaluate these quantities. However, the analysis of $ \TC^{(t)}_{\SC}$ and $\widehat{\TC}^{(t)}_{\SC}$ can  become tractable again under the assumption that the circuit is composed of  gates that are  sampled according to the Haar measure from some local group. In particular,  let us consider a circuit which takes the form
\begin{equation}
    U = \prod_l U_l\,,
\end{equation}
where we assume that $U_l$ acts non-trivially on $k_l$ (non-necessarily neighboring) qubits whose indexes we denote as $I_l$, and where we omitted the parameter dependency for ease of notation. For instance, if $U_1$ is a three-qubit gate acting on the first, second, and third qubits, we would have $k_1=3$ and $I_1=\{1,2,3\}$. Then, it is standard to assume that each $U_l$ is independently sampled from a local group $G_l\subseteq\mathbb{U}(2^{k_l})$ according to its associated Haar measure $d\mu_l(U_l)$. In this scenario, we can see that if we write
\begin{equation}\label{eq:moment-product}
\widehat{\TC}^{(t)}_{\SC}=\prod_j\widehat{\TC}^{(t)}_{G_l}\,,  
\end{equation}
where 
\begin{equation}
    \widehat{\TC}^{(t)}_{G_l}=\int_{G_l} d\mu_l(U_l)\, U_l^{\otimes t} \otimes (U_l^*)^{\otimes t}\,,
\end{equation}
then each individual $t$-th moment operator $\widehat{\TC}^{(t)}_{G_l}$ associated to each local gate can be evaluated via the Weingarten calculus as a projector onto the $t$-th fold commutant of $G_l$. 

\begin{figure}[t]
    \centering
    \includegraphics[width=0.45\textwidth]{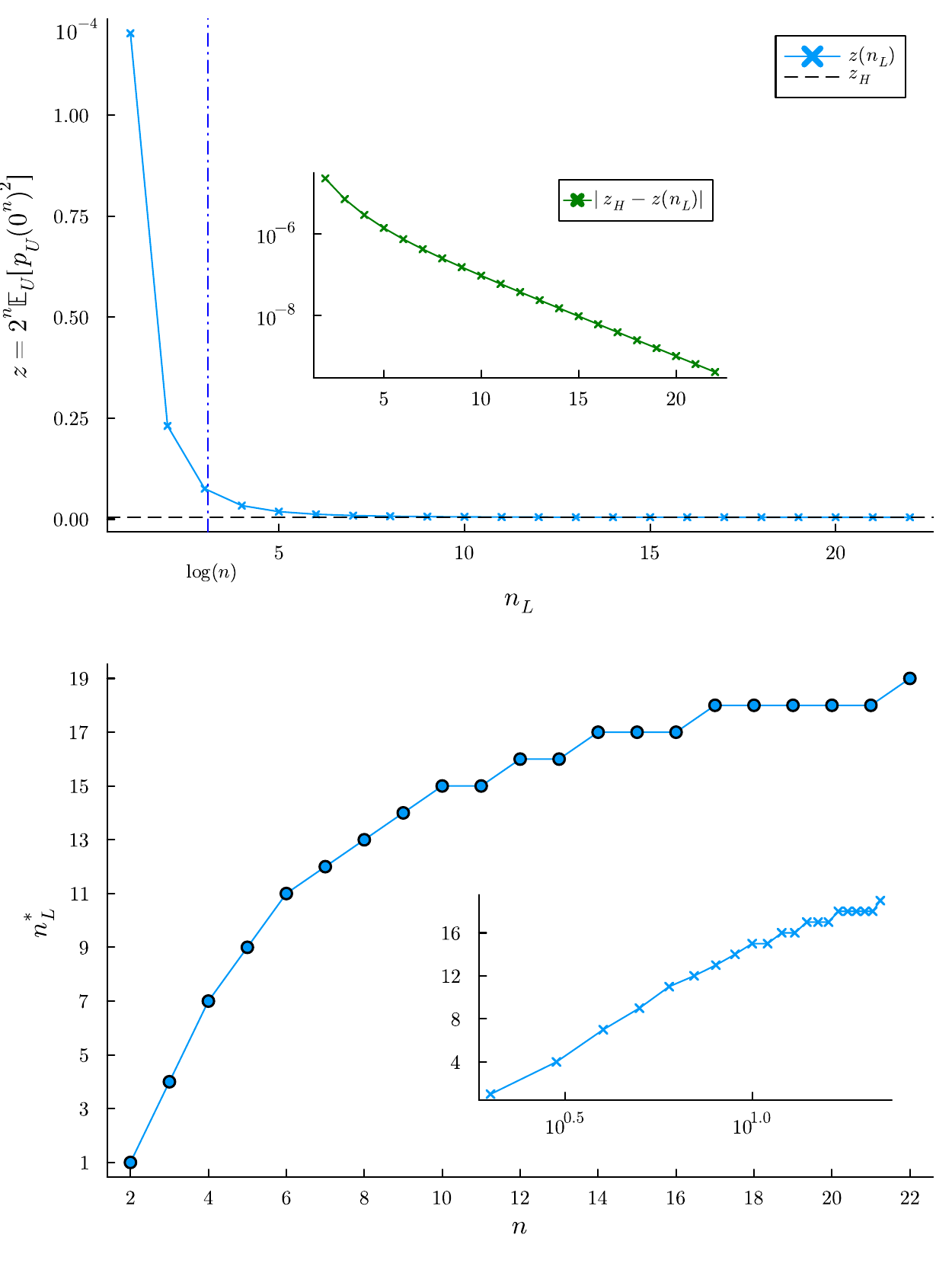}
    \caption{{\bf Random symplectic quantum circuits anti-concentrate}. The top panel shows the scaling of the collision probability $z$ (blue line) as a function of the number of layers $n_L$, for a system size of $n=22$ qubits. The dashed black line is the value $z_H$ to which $z$ converges as the circuit converges to a 2-design. The inset shows the difference $\vert z_H-z\vert$. In the bottom panel we show the number of layers $n_L^*$ needed to attain $\vert z_H-z\vert<\frac{\epsilon}{d}$, where we chose $\epsilon=0.01$, for different system sizes ranging from 2 to 22 qubits. The inset shows the same data in log-x scale, providing evidence for symplectic circuits anti-concentration happening at logarithmic depth.}
    \label{fig:anti-concentration}
\end{figure}

The previous idea of studying  circuits composed of random local gates has been explored in Refs.~\cite{hayden2007black,sekino2008fast,brown2012scrambling,lashkari2013towards,hosur2016chaos,nahum2018operator,von2018operator,hunter2019unitary,barak2020spoofing,napp2022quantifying,harrow2018approximate,letcher2023tight,hayden2016holographic,nahum2018operator,hunter2019unitary,harrow2018approximate,cerezo2020cost,pesah2020absence,braccia2024computing}, and it has been shown that the task of computing the moment operator can be mapped to that of analyzing  a Markov chain-like process obtained from the product of the non-orthogonal projectors $\widehat{\TC}^{(t)}_{G_l}$. Importantly, it is worth highlighting the fact that most of the previous references work with random quantum circuits composed of gates sampled from $\mathbb{U}(4)$ (with the notable exception of~\cite{braccia2024computing}). Thus, little to-no-attention has been payed to local random circuits leading to sets of unitaries  $\SC$ that belong to the symplectic group. As such, the question of who the local groups $G_l$ can be so that one still obtains (globally) symplectic unitaries in $\SC$ has not been yet addressed.

While \textit{a priori} one could be tempted to choose all $G_l$ as $\mathbb{SP}(2^{k_l-1})$, this would lead to non-symplectic unitaries (as we have already seen that arbitrary unitaries can be constructed from locally symplectic gates, see Sec.~\ref{sec:local-symp-circs}). Instead, referring to Proposition~\ref{prop:sp-algebra} we find that the most natural choice is
\begin{equation}\label{eq:groups-local}
    G_l=\begin{cases}
    \mathbb{O}(2^{k_l})\,,\quad\quad\,\, \text{if } 1\notin I_l\,,\\
    \mathbb{SP}(2^{k_l-1})\,,\quad \text{if } 1\in I_l\,.\\
    \end{cases}
\end{equation}
That is, if the gate $U_l$ acts non-trivially on the first qubit, then it must be sampled from a symplectic local group $\mathbb{SP}(2^{k_l-1})$, while if it does not act on the first qubit, then it must be sampled from an orthogonal local group $\mathbb{O}(2^{k_l})$. In fact, it follows directly from the proof of Theorem~\ref{th:symplectic-universal} that such circuits will produce unitaries in $\mathbb{SP}(d/2)$.

Given Eq.~\eqref{eq:groups-local}, one can  analyze features of locally random circuits such as how fast will their properties converge to those of a $t$-design over $\mathbb{SP}(d/2)$. As an example, let us study the depth at which the probability outcomes in the computational basis anti-concentrate, for a circuit composed of two-qubit Haar random gates acting in a brick-layered fashion on neighboring qubits (see Fig.~\ref{fig:local-random-circ}). Here, we need to evaluate the second moment operators $\widehat{\TC}^{(2)}_{\mathbb{SP}(2)}$ and $\widehat{\TC}^{(2)}_{\mathbb{O}(4)}$,  which respectively project onto their commutants spanned by $\{\id_4\otimes \id_4,SWAP,\Pi_s\}$ and  $\{\id_4\otimes \id_4,SWAP,\Pi\}$, with $\Pi=(\id_d\otimes\Omega)\Pi_s(\id_d\otimes\Omega)$. From here, we can study the action of $\widehat{\TC}^{(2)}_{\mathbb{SP}(2)}$ and $\widehat{\TC}^{(2)}_{\mathbb{O}(4)}$ simply by studying how they project their local commutants onto each other. In particular, we can leverage the recently developed tensor-network formalism of Ref.~\cite{braccia2024computing} to numerically investigate the behavior of the collision probability $z$ defined in Eq.~\eqref{eq:colision}.

We will analyze how $z$ changes as a function of the number of layers $n_L$ in the circuit (where a layer is defined as in Fig.~\ref{fig:local-random-circ}) for different qubit numbers, as this scaling can be used to diagnose the depth at which the architecture at hand anti-concentrates. In Fig.~\ref{fig:anti-concentration}, we show how $z$ approaches $z_H$ with increasing circuit depth for a system of $n=22$ qubits. There we also present the depth $n_L^*$ for which the  difference $\vert z_H-z\vert$ becomes  smaller than $\epsilon/d$, for some small constant $\epsilon$. In particular, we deem the condition $\vert z_H-z\vert<\frac{\epsilon}{d}$ as the emergence of anti-concentration \cite{dalzell2022randomquantum}. Our numerical results show that anti-concentration happens at logarithmic depth, i.e., $n_L^*=\OC(\log (n))$, which is the same scaling observed for quantum circuits composed of random unitary and orthogonal local gates \cite{dalzell2022randomquantum,braccia2024computing}.

\section{Conclusions}

In this work, we have addressed the study of quantum circuits implementing symplectic unitary transformations. In particular, we have introduced a simple universal architecture for symplectic unitaries that can be readily implemented on near-term quantum hardware, as it only requires one- and two-qubit gates acting on nearest neighbors in a one-dimensional lattice. Furthermore, we have derived properties of random symplectic circuits, both in the deep and shallow regimes, including a proof that the circuits' outputs can converge  to Gaussian processes (e.g., when measuring a Pauli) or exhibit the anti-concentration phenomenon (when performing computational-basis measurements). It remains to be proven whether these circuits anti-concentrate at logarithmic depth, although our numerical experiments suggest that this is indeed the case. In addition, we stress that the fact that symplectic quantum circuits form Gaussian processes opens the door to applying Bayesian inference techniques for learning tasks~\cite{garcia2023deep}.

Interestingly, our work reveals some key differences between circuits that implement unitary or orthogonal evolutions, and those that implement symplectic ones. For instance, we have shown that the  structure of the symplectic Lie algebra and its associated Lie group places a privileged role on a single qubit in the system, thus breaking typical qubit-exchange symmetries appearing when working with $\mathbb{U}(2^n)$ or $\mathbb{O}(2^n)$. This small, albeit important difference makes it such that care must be taken when constructing symplectic circuits, as translationally invariant sets of generators are not available. It also leads to potentially counter-intuitive results, such as circuits composed of locally symplectic gates being able to produce non-symplectic unitaries. 

Looking forward, we expect that our constructions will encourage the community to explore the simulation of physical processes described via symplectic unitaries, as these can now be compiled to qubit architectures and therefore implemented in most currently-available quantum hardware. Indeed, we hope that our work will spark the interest on quantum circuits that produce symplectic evolutions, and that compelling applications will be discovered soon.

\section*{Acknowledgments}

We thank Martin Larocca, Bojko N. Bakalov, Nahuel L. Diaz, and Alexander F. Kemper for insightful  conversations. D.G.M., P.B. and M.C.  were supported by Laboratory Directed Research and Development (LDRD) program of Los Alamos National Laboratory (LANL) under project numbers 20230527ECR and 20230049DR. M.C. was also initially  supported by LANL's ASC Beyond Moore’s Law project.

\section*{}

{\bf Note added}: After the completion of this work,  it was found that symplectic quantum circuits can be employed to efficiently solve BQP-complete problems directly related to the simulation of Gaussian bosonic circuits on exponentially many modes, thus obtaining an exponential quantum advantage~\cite{barthe2024gate}. Furthermore, it has been discovered that the orbits of random unitary circuits and random symplectic circuits are \emph{unconditionally} indistinguishable for any $t$, and any number of measurement shots. That is, the ensemble of \emph{pure} quantum states that one obtains by applying $U^{\otimes t}$ to an arbitrary pure state $|\psi\rangle^{\otimes t}$ is completely identical irrespective of whether $U$ is drawn from the unitary or symplectic Haar measures~\cite{west2024random}.

\bibliography{quantum.bib}

\appendix
\setcounter{theorem}{0}
\setcounter{corollary}{0}
\setcounter{proposition}{0}

\section{Proof of Proposition 1}
\label{ap:prop-1}

In this appendix, we provide the proof of Proposition~\ref{prop:sp-algebra}, which we recall for convenience.

 \begin{proposition} 
     A basis for the standard representation of the $\mathfrak{sp}(d/2)$ algebra is 
     \begin{equation} \label{eq-ap:sp-dla}
	   B_{\mathfrak{sp}(d/2)}\equiv i\{\{X,Y,Z\}\otimes P_s \}\,\cup\, i\{\id\otimes P_a\}\,,
    \end{equation}
    where $P_s$ and $P_a$ belong to the sets of arbitrary symmetric and anti-symmetric Pauli strings on $n-1$ qubits, respectively, and $\id,X,Y,Z$ are the usual $2\times 2$ Pauli matrices.
 \end{proposition}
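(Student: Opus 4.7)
The plan is to verify directly that (i) every element of $B_{\mathfrak{sp}(d/2)}$ lies in $\mathfrak{sp}(d/2)$, (ii) these elements are linearly independent, and (iii) their cardinality equals $\dim \mathfrak{sp}(d/2) = d(d+1)/2$. Together these three facts force $B_{\mathfrak{sp}(d/2)}$ to be a basis. Linear independence is immediate because the Pauli strings form an orthogonal basis of $\BC(\HC)$ under the Hilbert--Schmidt inner product, so distinct (scaled) Paulis are never linearly dependent.

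The main computational step is (i). Using $\Omega = iY\otimes\id^{\otimes n-1}$, the algebra condition $M^T\Omega=-\Omega M$ applied to $M=iP$ with $P=Q\otimes R$ (where $Q\in\{\id,X,Y,Z\}$ acts on the first qubit and $R$ is a Pauli string on the remaining $n-1$) reduces to
\begin{equation}
    Q^T Y \otimes R^T = - YQ \otimes R\,.
\end{equation}
A brief case analysis on $Q$ (using $\id^T=\id$, $X^T=X$, $Y^T=-Y$, $Z^T=Z$ together with the anticommutation relations) shows that $Q^T Y = YQ$ iff $Q=\id$, and $Q^T Y = -YQ$ iff $Q\in\{X,Y,Z\}$. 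Consequently the equation holds iff either $Q=\id$ and $R^T=-R$ (i.e.\ $R$ anti-symmetric), or $Q\in\{X,Y,Z\}$ and $R^T=R$ (i.e.\ $R$ symmetric). Anti-Hermiticity of $M=iP$ is automatic since $P$ is Hermitian. This exactly produces the two families appearing in Eq.~\eqref{eq-ap:sp-dla}.

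For (iii), I will count the symmetric and anti-symmetric Pauli strings on $n-1$ qubits. A Pauli string $P_1\otimes\cdots\otimes P_{n-1}$ satisfies $P^T=\pm P$ depending on the parity of the number of $Y$ factors (since $Y$ is the only anti-symmetric single-qubit Pauli). Using the generating-function identity $\sum_{j\text{ even}}\binom{n-1}{j}3^{n-1-j} = \tfrac{1}{2}(4^{n-1}+2^{n-1})$, one gets $|\{P_s\}| = \tfrac{1}{2}(4^{n-1}+2^{n-1})$ and $|\{P_a\}| = \tfrac{1}{2}(4^{n-1}-2^{n-1})$. Summing,
\begin{equation}
    |B_{\mathfrak{sp}(d/2)}| = 3\cdot\tfrac{4^{n-1}+2^{n-1}}{2} + \tfrac{4^{n-1}-2^{n-1}}{2} = \tfrac{4^n+2^n}{2} = \tfrac{d(d+1)}{2}\,,
\end{equation}
which matches the known dimension of $\mathfrak{sp}(d/2)$ and closes the argument.

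The only mildly subtle step is the case analysis for $Q$, where one must be careful with the transpose of $Y$ and with sign conventions; everything else is bookkeeping. No step should pose a real obstacle.
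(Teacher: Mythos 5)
Your proposal is correct and follows essentially the same route as the paper's proof: check the defining relation $M^T\Omega=-\Omega M$ for Pauli strings (you do this by a case analysis on the first tensor factor, the paper equivalently by the symmetry/antisymmetry of the whole string versus its (anti)commutation with $Y\otimes\id$), then count the symmetric and anti-symmetric Paulis on $n-1$ qubits and match the total $3N_1+N_2=d(d+1)/2$ against $\dim\mathfrak{sp}(d/2)$. All your intermediate computations (the $Q^TY=\pm YQ$ cases and the counts $(4^{n-1}\pm 2^{n-1})/2$) check out.
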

 
\begin{proof}
We first recall that any matrix $M\in\mathfrak{sp}(d/2)$ satisfies $M^T\Omega=-\Omega M$. Then, we note that $\Omega= iY\otimes\id_{d/2}$. Hence, we are looking for Pauli strings $P$ satisfying  $P^T (Y\otimes\id_{d/2})=-(Y\otimes\id_{d/2}) P$. We know that $P^T=P$ if and only if the number of $Y$'s in $P$ is even, and $P^T=-P$ otherwise. Therefore, the Pauli matrices belonging to $i\mathfrak{sp}(d/2)$ have to anti-commute with $(Y\otimes\id_{d/2})$ when $P^T=P$, and commute with it when $P^T=-P$. Given that any Pauli commutes with $\id_{d/2}$, it follows that all Pauli strings in $i\mathfrak{sp}(d/2)$ have the form $\{X,Y,Z\}\otimes P_s$ or $\id\otimes P_a$,
with $P_s$ a symmetric and $P_a$ an anti-symmetric Pauli string. Since $M^T\Omega=-\Omega M$ is a linear equation, all (real) linear combinations of such Paulis also belong in $i\mathfrak{sp}(d/2)$. Finally, the dimension of $B_{\mathfrak{sp}(d/2)}$ can be checked to be $d(d+1)/2$, which is precisely $\dim(\mathfrak{sp}(d/2))$, as follows. The set
$\{P_s\}$ $\left(\{P_a\}\right)$ is composed of all Paulis acting on $n-1$ qubits with an even (odd) number of $Y$'s, and it is an orthogonal basis for the space of symmetric (anti-symmetric) matrices. Therefore, the dimensions of $\{P_s\}$ and $\{P_a\}$ are $N_1=2^{n-1}(2^{n-1}+1)/2$ and $N_2=2^{n-1}(2^{n-1}-1)/2$, respectively.
 As such, there are $3N_1+N_2=2^{n-1}(2^n+1)=d(d+1)/2$ elements in $B_{\mathfrak{sp}(d/2)}$, which is precisely the dimension of $\mathfrak{sp}(d/2)$.
We then conclude that $B_{\mathfrak{sp}(d/2)}$ is a basis of $\mathfrak{sp}(d/2)$.

\end{proof}

 As a sanity check, we can show that the operators in Eq.~\eqref{eq-ap:sp-dla} are indeed closed under commutation, so that they form a Lie algebra. For this purpose, we can make use of the properties of the commutator of symmetric and anti-symmetric matrices. In particular, let $P_a, P_a',P_a''$ be anti-symmetric Pauli strings and $P_s, P'_s,P_s''$   symmetric Pauli strings. Then, we have that $([P_a,P_a'])^T = (P_a P_a' - P_a'P_a)^T= P_a' P_a - P_a P_a'= -[P_a,P_a']$. Similarly, we find that $([P_s,P_s'])^T= -[P_s,P_s']$ and $([P_a,P_s])^T=[P_a,P_s]$. That is, the commutator of two anti-symmetric or symmetric Paulis is anti-symmetric, whereas the commutator of a symmetric and an anti-symmetric one is symmetric (this is the reason why symmetric matrices do not form a Lie algebra under commutation). Therefore, the commutator of non-commuting matrices of the form  $[i\id\otimes P_a,i\id\otimes P_a']$ gives a matrix of the same form $i\id\otimes P_a''$, up to real constant factors that can be  ignored. The non-zero commutators $[i\id\otimes P_a, i\{X,Y,Z\}\otimes P_s ]$  have the form $i\{X,Y,Z\}\otimes P_s'$. Besides, the commutator $[iX\otimes P_s, iX\otimes P_s']$ returns either zero or a matrix of the form $\{\id\otimes P_a\}$ (the same happens if we replace $X$ by $Y$ or $Z$ in the first qubit).
The commutator $[iX\otimes P_s, iZ\otimes P_s']$ gives $iY\otimes P_s
''$ or zero, which is true because we are commuting two symmetric matrices, and hence we must obtain an anti-symmetric one. And finally, $[iX\otimes P_s, iY\otimes P_s']$ results in $iZ\otimes P_s''$ or zero, and analogously under the exchange $X\leftrightarrow Z$ on the first qubit. Therefore, the set of operators in Proposition~\ref{prop:sp-algebra} is closed under commutation.

\section{Proof of Theorem 1}
\label{ap:th-1}

We here prove Theorem~\ref{th:symplectic-universal}, which reads as

\begin{theorem}
    The set of unitaries of the form in Eq.~\eqref{eq:circuit}, with  generators taken from
    \begin{equation}\label{eq-ap:generators}
        \GC = \{Y_i\}_{i=1}^{n} \cup \{X_iY_{i+1}, Y_iX_{i+1}\}_{i=2}^{n-1} \cup X_1 \cup Z_1 Z_2\,,
    \end{equation} 
is universal in $\mathbb{SP}(d/2)$, as
    \begin{equation}
    \Span_{\mathbb{R}} \langle i \GC\rangle_{\rm Lie}=  \mathfrak{sp}(d/2)\,.
\end{equation}
Here $\langle i\GC\rangle_{\rm Lie}$ is the Lie closure of $i\GC$, i.e., the set of operators obtained by the nested commutation of the elements in $i\GC$.
\end{theorem}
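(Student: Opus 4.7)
The plan is to use Proposition~\ref{prop:sp-algebra}, which fixes a basis $B_{\spf(d/2)}=i\{\{X,Y,Z\}\otimes P_s\}\cup i\{\id\otimes P_a\}$ for $\mathfrak{sp}(d/2)$, where the first tensor factor acts on qubit $1$ and $P_s$, $P_a$ range over the symmetric and anti-symmetric Pauli strings on qubits $2,\ldots,n$. I would first dispatch the easy inclusion $\langle i\GC\rangle_{\rm Lie}\subseteq\mathfrak{sp}(d/2)$ by verifying that each generator individually lies in $B_{\spf(d/2)}$: $iY_1$ and $iX_1$ have the form $i\{Y,X\}\otimes\id^{\otimes n-1}$; $iZ_1Z_2=iZ\otimes Z\otimes\id^{\otimes n-2}$ has a symmetric second factor; and for $j\ge 2$, the operators $iY_j$, $iX_jY_{j+1}$, $iY_jX_{j+1}$ are of the form $i\id\otimes P_a$ with an odd number of $Y$'s on qubits $2,\ldots,n$. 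Since $\mathfrak{sp}(d/2)$ is closed under commutators the inclusion follows, and what remains is to generate every basis element of $B_{\spf(d/2)}$ by nested commutators of $i\GC$.

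For the anti-symmetric block $\{i\id\otimes P_a\}$, observe that $\{iY_j\}_{j=2}^n\cup\{iX_jY_{j+1},iY_jX_{j+1}\}_{j=2}^{n-1}$ supplies four of the six $\so(4)$ basis elements $\{YI,IY,XY,YX,YZ,ZY\}$ on every nearest-neighbor pair $(j,j+1)$ in qubits $2,\ldots,n$; the missing $YZ$ and $ZY$ appear via the single commutators $[iY_j,iX_jY_{j+1}]=2iZ_jY_{j+1}$ and $[iY_{j+1},iY_jX_{j+1}]=2iY_jZ_{j+1}$. This installs a complete, translationally-invariant set of $\so(4)$ generators on every neighboring pair in qubits $2,\ldots,n$, so by the classification of Ref.~\cite{wiersema2023classification} the Lie closure on those $n-1$ qubits is $\so(2^{n-1})=\so(d/2)$, yielding every $i\id\otimes P_a$.

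For the symmetric blocks $i\{X,Y,Z\}\otimes P_s$, I would first note that $[iX_1,iY_1]=-2iZ_1$, so $iZ_1=iZ\otimes\id^{\otimes n-1}$ lies in the closure. Next, taking the seed $iZ_1Z_2$ and commuting with any $i\id\otimes P_a$ obtained above yields $iZ\otimes[Z\otimes\id^{\otimes n-2},P_a]=iZ\otimes P_s'$, where $P_s'$ is automatically symmetric because the commutator of a symmetric and an anti-symmetric matrix is symmetric. The crux is that the $\so(d/2)$-orbit of $Z\otimes\id^{\otimes n-2}$, together with $\id_{d/2}$, spans every symmetric $(d/2)\times(d/2)$ matrix: this is the standard irreducibility of the traceless symmetric 2-tensor representation $\mathrm{Sym}^{2}_{0}(\mathbb{C}^{d/2})$ of $\mathfrak{so}(d/2)$ for $d/2\ge 3$, and the small cases $n=1,2$ can be handled by direct enumeration. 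Once every $iZ\otimes P_s$ is in the closure, the identities $iX\otimes P_s=-\tfrac{1}{2}[iY_1,iZ\otimes P_s]$ and $iY\otimes P_s=\tfrac{1}{2}[iX_1,iZ\otimes P_s]$ deliver the remaining $iX\otimes P_s$ and $iY\otimes P_s$ basis elements.

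The hard part is the irreducibility input of the previous paragraph; invoking the standard representation theory of $\mathfrak{so}(m)$ on symmetric tensors makes it painless, but if one prefers a self-contained argument it can be replaced by an explicit induction on the weight of $P_s$, iteratively commuting already-produced $iZ\otimes P_s$ elements with nearest-neighbor anti-symmetric generators such as $iX_jY_{j+1}$, $iY_jX_{j+1}$, and $iY_j$ to grow the Pauli support one qubit at a time while preserving symmetry. Either route exhibits every element of $B_{\spf(d/2)}$ as a nested commutator of $i\GC$, so combined with the reverse inclusion we conclude $\Span_{\mathbb{R}}\langle i\GC\rangle_{\rm Lie}=\mathfrak{sp}(d/2)$.
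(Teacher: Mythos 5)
Your proposal is correct, and its overall architecture matches the paper's: first generate $\so(d/2)$ on qubits $2,\dots,n$ from the translationally invariant part of $\GC$, then use $X_1$, $Y_1$ and the seed $Z_1Z_2$ to reach the symmetric blocks $i\{X,Y,Z\}\otimes P_s$. The difference lies in how each half is closed out. For the orthogonal half, the paper runs a self-contained induction on the number of qubits (explicitly tracking which Pauli supports on the $k$-th site have been produced), whereas you complete the local $\so(4)$ sets with two single commutators and then invoke the classification of Ref.~\cite{wiersema2023classification}; since the paper itself cites that reference for exactly this fact in the main text, this is a legitimate shortcut, though less self-contained. For the symplectic half, the paper again proceeds by explicit Pauli bookkeeping (commuting $Z_1Z_2$ with the anti-symmetric elements to get $Z\otimes P_s$ with $X_2$ or $Y_2$ support, then sweeping with $X_1,Y_1,Y_2$), while you replace all of that with a single representation-theoretic input: the irreducibility of $\mathrm{Sym}^2_0(\mathbb{C}^{d/2})$ under the adjoint action of $\so(d/2)$, applied to the traceless symmetric seed $Z\otimes\id^{\otimes n-2}$, followed by the commutators with $iX_1,iY_1$ to rotate $Z\otimes P_s$ into $X\otimes P_s$ and $Y\otimes P_s$. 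This is cleaner and more conceptual, and it makes transparent \emph{why} a single symmetric seed suffices; the price is that it only covers $d/2\geq 3$ generically, so the $n=2$ case (where $\so(2)$ is abelian and $\mathrm{Sym}^2_0$ is reducible) genuinely must be checked by hand as you note — you should actually carry out that four-generator computation rather than just assert it, since the theorem as stated includes $n=2$. With that small case filled in, both routes establish the result.
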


\begin{proof}
    We begin by showing that the generators $ \{Y_i\}_{i=2}^{n} \cup \{X_iY_{i+1}, Y_iX_{i+1}\}_{i=2}^{n-1}$  produce the full special orthogonal algebra $\so(d/2)$ in the last $n-1$ qubits. This algebra is the span (over the real numbers) of all the real anti-symmetric matrices. In other words, it is the span of all the Pauli strings with an odd number of $Y$'s (times $i$). Clearly, the generators $Y\otimes \id$, $\id\otimes Y$, $Y\otimes X$ and $X\otimes Y$ generate the full $\so(4)$ algebra, as it can be checked by direct calculation. From here, we proceed by induction. That is, we show that if we have the full $\so(2^k)$ algebra for some $k$, we obtain the $\so(2^{k+1})$ algebra by adding the operators  $\id_{2^k}\otimes Y$, $\id_{2^{k-1}}\otimes Y\otimes X$ and $\id_{2^{k-1}}\otimes X\otimes Y$, and taking commutators (see also~\cite{wiersema2023classification}). In particular, we notice that any anti-symmetric Pauli string with support on $k+1$ site takes the form $P_s\otimes Y$ or $P_a \otimes\{\id,X,Z\}$, where $P_s$ and $P_a$ are  arbitrary symmetric and anti-symmetric Pauli strings on $k$ sites, respectively. Since we already have the full $\so(2^k)$ algebra, we have all Pauli strings of the form $P_a \otimes \id$, where the support of $P_a$ on the $k$-th site can be any of $\{\id,X_k,Y_k,Z_k\}$. Commuting $\id_{2^{k-1}}\otimes X\otimes Y$ with all Pauli strings $P_a \otimes \id$ such that the support of $P_a$ on the $k$-th site is $Y_k$, produces all operators of the form $P_s\otimes Y$ such that the support of $P_s$ on the $k$-th site is $Z_k$. Further commuting the latter with $Y_k$ gives all the operators $P_s\otimes Y$ such that the support of $P_s$ on the $k$-th site is $X_k$. Likewise, commuting $\id_{2^{k-1}}\otimes X\otimes Y$ with all Pauli strings $P_a \otimes \id$ such that the support of $P_a$ on the $k$-th site is $Z_k$, produces all operators of the form $P_s\otimes Y$ such that the support of $P_s$ on the $k$-th site is $Y_k$. We now compute the commutators of $\id_{2^{k-1}}\otimes Y\otimes X$ with operators $P_s\otimes Y$ such that the support of $P_s$ on the $k$-th site is $Y_k$. This gives us all operators  $P_a\otimes Z$ such that the support of $P_a$ on the $k$-th site is $\id$, which upon commutation with $Y_k$ gives us all operators  $P_a\otimes X$ such that the support of $P_a$ on the $k$-th site is $\id$. Now, we know that the $\so(2^k)$ algebra contains all the anti-symmetric Paulis (times $i$). Hence, commuting operators from $\so(2^k)$ with $P_a\otimes \{X,Z\}$  such that the support of $P_a$ on the $k$-th site is $\id$, we can generate all operators of the form $P_a\otimes \{X,Z\}$. The last step is to obtain the operators $P_s\otimes Y$ such that the support of $P_s$ on the $k$-th site is $\id$. We achieve this via commutation of $\id_{2^{k-1}}\otimes Y\otimes X$ with operators of the form $P_a\otimes Z$ such that the support of $P_s$ on the $k$-th site is $Y$.

    We are left with the task of showing that adding the operators $X_1$ and $Z_1 Z_2$ indeed generates the $\spf(d/2)$ algebra, i.e., all the operators in Eq.~\eqref{eq-ap:sp-dla}. Those of the form $i\{\id\otimes P_a\}$ are already generated by the orthogonal operators in the last $n-1$ qubits. Then, starting with the commutators of $Z_1 Z_2$ with the operators $\id\otimes P_s$, we get all operators $Z\otimes P_s$ such that the support of $P_s$ on the second qubit is $X_2$ or $Y_2$. Further commuting those with  and $X_1$, $Y_1$ and $Y_2$ we obtain all operators of the form $\{X,Y,Z\}\otimes P_s$ such that the support of $P_s$ on the second qubit is not $\id$. To generate the remaining algebra operators, we commute $Z_1Z_2$ with the operators $\{X,Y\}\otimes P_s$ with support $Z_2$ on the second qubit, and then all the resulting operators with $X_1$. This way, we have generated all the operators in Eq.~\eqref{eq-ap:sp-dla}. Since the algebra is closed under commutation, and all our operators belong to it, this concludes the proof.
\end{proof}

\section{Proof of Proposition 2}
\label{ap:prop-2}

Here, we present the proof of Proposition \ref{prop:symplectic-local}, that we restate for convenience.

\begin{proposition} 
    The set of unitaries of the form in Eq.~\eqref{eq:circuit}, with  generators taken from
    \begin{equation}\label{eq-ap:generators-local}
        \GC_L = \bigcup_{i=1}^{n-1}\{X_i ,Y_i, Y_{i+1}, X_iX_{i+1}\}\,,
    \end{equation} 
is universal in $\mathbb{SU}(d)$, as
    \begin{equation}
    \Span_{\mathbb{R}} \langle i \GC_L\rangle_{\rm Lie}=  \mathfrak{su}(d)\,.
\end{equation}
\end{proposition}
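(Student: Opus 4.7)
The plan is to show that the Lie closure of $i\GC_L$ contains $i P$ for every non-identity Pauli string $P$ on $n$ qubits, since the set of all $iP$ with $P\ne\id^{\otimes n}$ forms a basis of $\mathfrak{su}(d)$ and the algebra is automatically closed under commutation. The inclusion $\langle i\GC_L\rangle_{\rm Lie}\subseteq \mathfrak{su}(d)$ is clear because every commutator of traceless anti-Hermitian matrices is traceless anti-Hermitian, so only the reverse inclusion requires work. Note also that, since the generating set is translationally invariant in the sense of Ref.~\cite{wiersema2023classification}, one could alternatively invoke that classification directly to identify the algebra; I will however give a direct constructive argument, which is more self-contained.

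First I would recover all single-qubit Pauli generators at every site. Observe that $X_i, Y_i\in \GC_L$ for every $i$ (taking $X_i$ from the pair $(i,i+1)$ when $i<n$ and $Y_n$ from the pair $(n-1,n)$). From $[X_i,Y_i]=2iZ_i$ we obtain $Z_i$ for all $i$. Next I would bootstrap all nearest-neighbor two-qubit Pauli strings on each bond $(i,i+1)$. Starting from $X_iX_{i+1}\in\GC_L$, iterated commutators with the single-qubit operators $X_j,Y_j,Z_j$ for $j\in\{i,i+1\}$ produce $Y_iX_{i+1}, Z_iX_{i+1}, X_iY_{i+1}, X_iZ_{i+1}$ and in turn, by commuting these with another single-qubit Pauli on the same site, all nine non-trivial two-qubit Pauli tensors $\{X,Y,Z\}_i\otimes\{X,Y,Z\}_{i+1}$.

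Having all single-site and all nearest-neighbor two-site Paulis in the Lie closure, I would then extend by induction on the weight (number of non-identity tensor factors) of a Pauli string $P$ to show $iP\in\langle i\GC_L\rangle_{\rm Lie}$ for every non-identity Pauli $P$. The inductive step is the standard one: given a Pauli string $P$ of weight $k$ whose support is contained in some contiguous block (or, more generally, any string of weight $k$) together with a single-site Pauli $Q_j$ at a site $j$ adjacent to the support of $P$ and chosen to anti-commute with the neighbor of $P$ at the adjoining site, the commutator $[P,Q_aQ_{a+1}]$ (for a suitable nearest-neighbor string $Q_aQ_{a+1}$ already in the algebra and with $a$ on the boundary of $\mathrm{supp}(P)$) produces, up to a nonzero scalar, a Pauli of weight $k+1$ that extends the support by one site; iterating this extends support to any subset, and toggling the type of Pauli at any given site is achieved by commutation with a single-site generator at that site. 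Thus every non-identity Pauli on $n$ qubits lies in $\langle i\GC_L\rangle_{\rm Lie}$.

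The main obstacle is ensuring the bookkeeping of the inductive extension step: at each stage one must choose the boundary operator $Q_aQ_{a+1}$ and the target site so that the resulting commutator is nonzero and that the type of Pauli at the newly added site is the desired one. This is resolved by the freedom of choosing among all nine nearest-neighbor two-qubit Paulis established in the previous step, and by toggling single-site factors using $[X_j,Y_j]\propto Z_j$ and its cyclic permutations. Once this induction is complete, $\{iP : P\ne \id^{\otimes n}\}\subseteq\langle i\GC_L\rangle_{\rm Lie}$, which together with the initial containment $\langle i\GC_L\rangle_{\rm Lie}\subseteq\mathfrak{su}(d)$ yields $\mathrm{Span}_{\mathbb{R}}\langle i\GC_L\rangle_{\rm Lie}=\mathfrak{su}(d)$, completing the proof.
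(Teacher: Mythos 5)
There is a genuine gap at the boundary of the chain. The union $\GC_L=\bigcup_{i=1}^{n-1}\{X_i,Y_i,Y_{i+1},X_iX_{i+1}\}$ contains $X_i$ only for $i\le n-1$; the operator $X_n$ is \emph{not} a generator. Consequently your first step --- ``$X_i,Y_i\in\GC_L$ for every $i$, hence $Z_i\propto[X_i,Y_i]$ for all $i$'' --- fails at $i=n$, and the subsequent claim that commuting $X_{n-1}X_n$ with single-site Paulis on sites $n-1$ and $n$ yields all nine two-qubit Paulis on the last bond fails as well, since on site $n$ only $Y_n$ is available. In fact, the generators supported on the last bond, $\{X_{n-1},Y_{n-1},Y_n,X_{n-1}X_n\}$, are exactly the local $\spf(2)$ generators; their Lie closure inside $\su(4)$ is the ten-dimensional algebra spanned by $\{X,Y,Z\}\otimes\{\id,X,Z\}$ together with $\id\otimes Y$, which contains neither $X_n$ nor $Z_n$ nor any $P_{n-1}\otimes Y_n$. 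This is precisely the subtle point the proposition is making (a gate set that is only $\spf(2)$ on each bond nevertheless generates $\su(d)$ globally), so it cannot be dispatched by two-qubit bookkeeping on the last bond alone; the missing operators can only be reached through commutators whose support extends beyond qubits $n-1$ and $n$.

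The paper closes exactly this gap: it first generates $\su(2^{n-1})$ on the first $n-1$ qubits (where your argument is sound, since there $X_j$ and $Y_j$ are both available), then commutes elements of that algebra with the last-bond operators to produce all strings $P\otimes\{X_n,Z_n\}$ with $P\neq\id$, then strings of the form $P_{n-2}\otimes P'_{n-1}\otimes Y_n$, and finally extracts the missing single-qubit operator from a commutator of two weight-three strings, e.g.\ $[Y_{n-2}Y_{n-1}Y_n,\,Y_{n-2}Y_{n-1}X_n]\propto Z_n$, after which everything on the last two qubits follows. Your induction on the support of a Pauli string is a reasonable skeleton, but as written its base case (all single-site Paulis everywhere and all nine two-site Paulis on every bond) is false on the last site/bond, so the proof does not go through without an additional argument of the above type.
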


\begin{proof}
    The first step is to notice that we can generate via commutators the full special unitary algebra $\su(d/2)$  in the first $n-1$ qubits. This is true because we have the single-qubit operators $X,Y,Z$ acting on all of the latter, and also the two-qubit operators $X\otimes X$ on all qubit pairs $\{(q,q+1)\}_{q=1}^{n-2}$. Commuting $X\otimes X$ with the single-qubit Paulis, we obtain all the two-qubit nearest-neighbors Pauli operators. From here, one can generate the entire $\su(d/2)$ algebra as detailed in~\cite{larocca2021diagnosing}. The last step is computing the commutators of operators in $\su(d/2)$ with the generators acting on the last two qubits. Here, we note that commuting with $Y_{n-1}$ just transforms $X_{n-1}$ into $ Z_{n-1}$  (up to a constant factor), and vice versa, and hence it does not generate any new linearly independent operator, since we already have the entire $\su(d/2)$ algebra. The same is true for commutators with $X_{n-1}$. Furthermore, all operators in $\su(d/2)$ have trivial support in the last qubit and therefore commute with $Y_n$.
    We then turn to the commutations with the two-qubit operators acting on the last pair of qubits, which are $\{\{X,Y,Z\}\otimes \{X,Z\}\}$ according to Eq.~\eqref{eq-ap:sp-dla}. Commuting these with operators in $\su(d/2)$ produce all operators of the form $P\otimes \{X_n, Z_n\}$, where $P\neq \id_{2^{n-1}}$ has support on the first $n-1$ qubits and $\{X_n, Z_n\}$ are the $X,Z$ Pauli matrices acting on the $n$-th qubit. Further commuting the latter, we can obtain all three-qubit operators of the form $P_{n-2}\otimes P_{n-1}'\otimes Y$, where $P_{n-2}$ and $P_{n-1}'$ are arbitrary Pauli operators (different from the identity) acting on the qubits $n-2$ and $n-1$. Finally, commuting e.g., $Y_{n-2} \otimes Y_{n-1}\otimes Y_n$ with $Y_{n-2} \otimes Y_{n-1}\otimes X_n$, we get $Z_n$, from which we can generate all single-qubit and two-qubit operators acting on the last two qubits. Since we know that these are universal for quantum computation, we have generated the full $\su(d)$ algebra.
\end{proof}

\section{Asymptotic Weingarten calculus}\label{ap:asympt}

We now discuss the reasons why the Gram matrix W in Eq.~\eqref{eq:gram-split} takes that form, namely
\begin{equation}\label{eq-ap:gram-split}
    W = d^t\left(\id_{D_t}+\frac{1}{d} B\right)\,,
\end{equation}
with the entries of $B\in\OC(1)$.
This follows from the fact that $\Tr[F_d(\sigma^T) F_d(\sigma)]=d^t$ $\forall \sigma\in\brauer$ and $\left|\Tr\left[F_d(\sigma^T)F_d(\pi)\right]\right|\leq d^{t-1}$  when $\pi\neq\sigma$. 
Diagrammatically, $\sigma$ and $\sigma^T$ are specular images of each other  $\forall \sigma \in   \brauer$. For the case of permutations in $S_t$, it is  then clear that $\Tr\left[F_d(\sigma^T)F_d(\sigma)\right]= \Tr\left[F_d(\sigma)\ad F_d(\sigma)\right]=\Tr\left[\id_d^{\otimes t}\right]=d^t$, as $\sigma^T=\sigma^{-1}$ and $F_d(\sigma^{-1})=F_d(\sigma)\ad $. 
For the rest of the elements in $\brauer$, which do not have an inverse, we notice that $\sigma$ and $\sigma^T$ are such that for every pair $\lambda_\gamma, \sigma(\lambda_\gamma)\leq t$ there exists a pair $\lambda_{\gamma'}, \sigma^T(\lambda_{\gamma'}) > t$ such that $\lambda_{\gamma'}=\lambda_\gamma+t $ and $\sigma^T(\lambda_{\gamma'})=\sigma(\lambda_\gamma)+t$. A similar result holds  for every pair $\lambda_\gamma, \sigma(\lambda_\gamma)> t$ (this is a consequence of $\sigma$ and $\sigma^T$ being specular images). Hence, the number of $-d$ factors (or equivalently, closed loops) that appear in $F_d(\sigma^T)F_d( \sigma)$ is even, and we have $\Tr[F_d(\sigma^T) F_d(\sigma)]=d^t$.
All other entries of the Gram matrix, $\Tr[F_d(\sigma^T)F_d(\pi)]$ where $\pi\neq\sigma$, are the same as those for the representation of the Brauer algebra $\mathfrak{B}_t(d)$ from the orthogonal Schur-Weyl duality, up to a minus sign in some entries. Therefore, it follows that $\left|\Tr\left[F_d(\sigma^T)F_d(\pi)\right]\right|\leq d^{t-1}$ (see e.g., Supplemental Proposition 13 in Ref.~\cite{garcia2023deep}), and so we find Eq.~\eqref{eq-ap:gram-split}.

\section{Proof of Theorem 2}
\label{ap:th-2}

In this Appendix we provide the proof of Theorem~\ref{th:gauss-1}, which states that the outputs of random symplectic quantum circuits converge in distribution to a Gaussian process under certain conditions.
    
\begin{theorem}\label{th-ap:gauss-1}
    Let $\mathscr{C}$ be a vector of expectation values of the Hermitian operator $O$ over a set of states from $\mathscr{D}$, as in Eq.~\eqref{eq:random_vector}. If  $\Tr[\rho_j \rho_{j'}]\in\Omega\left(1/\poly(\log(d))\right)$ and $\Big|\Tr[\Omega\rho_j \Omega\rho_{j'}^T]\Big|\in o\left(1/\poly(\log d)\right)$ $\forall j,j'$, then in the large $d$-limit $\mathscr{C}$ forms a GP with mean vector $\vec{\mu}=\vec{0}$ and covariance matrix
\begin{equation}\label{eq-ap:covariance-gp1}
    \vec{\Sigma}_{j,j'} = \frac{\Tr[\rho_{j}\rho_{j'}]}{d} \,.
\end{equation}
\end{theorem}

\begin{proof}

Our goal is to compute the moments of $\mathscr{C}$, that is, quantities of the form
\begin{equation}
    \mathbb{E}_{\SPBB(d/2)}\left[C(\rho_{j_1})\cdots C(\rho_{j_t})\right]\,.
\end{equation}
Using the linearity of the trace, and the fact that $\Tr[A] \Tr[B] = \Tr[A\otimes B]$, we find that
\small
\begin{equation}\begin{split}
    &\int_{\SPBB(d/2)} d\mu(U)\, \Tr[U\rho_{j_1}U^\dagger O ] \cdots \Tr[U\rho_{j_t}U^\dagger O ] = \\
    & \Tr\left[\int_{\SPBB(d/2)} d\mu(U)\, U^{\otimes t} \Lambda (U^\dagger)^{\otimes t} O^{\otimes t}\right] = \Tr\left[\TC^{(t)}_{\SPBB(d/2)}[\Lambda] \,O^{\otimes t}\right]\,,
\end{split}
\end{equation}
\normalsize
where we defined $\Lambda\equiv \rho_{j_1}\otimes\cdots\otimes \rho_{j_t}$. We first exactly compute the first and second moments. For $t=1$, we have
\begin{equation}
    \mathbb{E}_{\SPBB(d/2)}\left[C(\rho_j)\right] = \Tr\left[\TC^{(t)}_{\SPBB(d/2)}[\rho_j] \,O\right] = \Tr\left[\frac{\id_d}{d} \,O\right] = 0\,,\nonumber
\end{equation}
where we used Eq.~\eqref{eq:twirl-1} and the property that $O$ is traceless. The second-order twirl is given by Eq.~\eqref{eq:twirl-2}, i.e.,
\footnotesize
\begin{align} \label{eq-ap:twirl-2}
    \TC^{(2)}_{\mathbb{SP}(d/2)}\left[\Lambda\right]&=\frac{(d-1)\Tr[\Lambda]  -\Tr[\Lambda{\rm SWAP}]  +\Tr[\Lambda \Pi_s]}{d(d+1)(d-2)}\,\id_d\otimes \id_d\nonumber\\
    &+\frac{-\Tr[\Lambda] + (d-1)\Tr[\Lambda{\rm SWAP}]  -\Tr[\Lambda \Pi_s]}{d(d+1)(d-2)} \,{\rm SWAP}\nonumber\\
    &+\frac{\Tr[\Lambda]  -\Tr[\Lambda{\rm SWAP}] + (d-1) \Tr[\Lambda \Pi_s]}{d(d+1)(d-2)}\,\Pi_s\,.
\end{align}
\normalsize
Using Eqs.~\eqref{eq-ap:twirl-2} and~\eqref{eq:ricochet}, the covariance matrix entries are found to be
\begin{equation}\label{eq-ap:cov-exact}
\begin{split}
     \vec{\Sigma}_{j, j'}^{\mathbb{SP}}&=\frac{1}{d+1} \left(\Tr[\rho_j \rho_{j'}] +\Tr[\Omega\rho_j \Omega \rho_{j'}^T] \right) \,.\end{split}
\end{equation}
Let us evaluate $\Tr[\Omega P_j \Omega P_{j'}^T]$ for Pauli operators $P_j,P_{j'}$, which gives
\begin{equation} \label{eq:pauli-omega-trace}
    \Tr[iY\otimes\id_{d/2}\, P_j \,iY\otimes \id_{d/2} \,P_{j'}^T]=\begin{cases}
        d\, \delta_{j j'}\quad {\rm if\; P\in i\mathfrak{sp}(d/2)}\,, \\
        -d \,\delta_{j j'} \quad {\rm otherwise}\,.
    \end{cases}
\end{equation}
Since every quantum state can be written as
\begin{equation}
    \rho =\frac{1}{d} \sum c_k P_k\,,
\end{equation}
where the $c_k$ are real coefficients such that $-1\leq c_k \leq 1$ $\forall k$, it follows that
\begin{align} 
    \Tr[\Omega\rho_j \Omega \rho_j^T ] &= \frac{1}{d} \left(\sum_{k\backslash P_k\in i\mathfrak{sp}(d/2)} c_k^2 - \sum_{k\backslash P_k\notin i\mathfrak{sp}(d/2)} c_k^2 \right)\nonumber  \\ &= 2 \,\Tr_{\mathfrak{g}}[\rho_j^2] - \Tr[\rho_j^2]\,,
\end{align}
where we defined $\Tr_{\mathfrak{g}}\left[\cdot\right]$ as in Eq.~\eqref{eq:g-covariance}. Analogously,
\begin{equation} \label{eq-ap:bell-contribution}
    \Tr[\Omega\rho_j \Omega \rho_{j'}^T ] = 2 \,\Tr_{\mathfrak{g}}[\rho_j \rho_{j'}] - \Tr[\rho_j \rho_{j'}]\,.
\end{equation} 
Hence, we arrive at the following expression for the covariance matrix entries,
\begin{equation} \label{eq-ap:exact-cov}
    \vec{\Sigma}_{j, j'}^{\mathbb{SP}}=\frac{2 \,\Tr_{\mathfrak{g}}\left[\rho_j\rho_{j'}\right]}{d+1}\,.
\end{equation}
Using that $\Tr[\rho_j \rho_{j'}]\in\Omega\left(1/\poly(\log(d))\right)$ and $\Big|\Tr[\Omega\rho_j \Omega\rho_{j'}^T]\Big|\in o\left(1/\poly(\log d)\right)$ $\forall j,j'$, we can approximate this covariance in the large-$d$ limit as in Eq.~\eqref{eq-ap:covariance-gp1}.
We remark here that we could have substituted any Pauli $P_j$ by $O'=S P_j S^\dagger$, where $S\in \SPBB(d/2)$, in Eq.~\eqref{eq:pauli-omega-trace}, and that equation would still hold. This follows from the symplectic condition $S^T \Omega S = \Omega$ and the fact that the transpose of a symplectic matrix is symplectic. This implies that our GP results will also be valid when we replace $O$ by $O'$.

To compute higher moments we will use the asymptotic Weingarten calculus for the symplectic group explained in Sec.~\ref{sec:weingarten}. In particular, Eq.~\eqref{eq:twirl-t} gives
\footnotesize
\begin{equation} \label{eq-ap:gp-twirl} \begin{split}
    \Tr\left[\TC^{(t)}_{\SPBB(d/2)}[\Lambda] \,O^{\otimes t}\right] 
   &=\frac{1}{d^t}\sum_{\sigma\in \brauer}\Tr\left[F_d(\sigma^T)\Lambda \right]\Tr\left[F_d(\sigma) O^{\otimes t}\right]\\&+\!\!\!\!\sum_{\sigma,\pi\in \brauer}\!\!\!\!\frac{c_{\pi,\sigma}}{d^t}\,\Tr\left[F_d(\sigma^T)\Lambda\right]\Tr\left[F_d(\pi) O^{\otimes t}\right].
\end{split}
\end{equation}
\normalsize
Let us first focus on the factors of the form $\Tr\left[F_d(\sigma) O^{\otimes t}\right]$. It is straightforward to show that $\Tr\left[\sigma O^{\otimes t}\right]=0$ whenever $\sigma$ contains a cycle of odd length (this is a consequence of $O$ being traceless), and $\Big|\Tr\left[F_d(\sigma) O^{\otimes t}\right]\Big|=d^{|\sigma|}$, where $|\sigma|$ is the number of cycles in $\sigma$, otherwise. We refer to Supplemental Proposition 6 in Ref.~\cite{garcia2023deep} for a detailed derivation of the previous when $\sigma$ is a permutation. 

When $\sigma$ is not a permutation, we simply note that for all $\sigma$
\begin{equation}
    \Tr\left[F_d(\sigma) O^{\otimes t}\right] = \pm \prod_{\alpha=1}^{|\sigma|} \Tr\left[O^{|c_\alpha|} \Omega^{2 b_\alpha}\right]=\pm d^{|\sigma|}\,,
\end{equation}
where the product runs over the cycles in $\sigma$, $|c_\alpha|$ is the length of the cycle $c_\alpha$, and $b_\alpha$ is the number of pairs $(i,c_\alpha(i))$ in $c_\alpha$ such that both $i$ and $c_\alpha(i)\leq t$ (i.e., the number of pairs $(i,c_\alpha(i))$ that are in the first column). Here, we used the fact that $\Omega$ either commutes or anti-commutes with $O$, together with $O^T=\pm O$ and $\Omega^T = -\Omega=\Omega^{-1}$.
It is clear then that  $\Big|\Tr\left[F_d(\sigma) O^{\otimes t}\right]=d^{|\sigma|}\Big|$ is maximized whenever $\sigma$ consists of a product of disjoint length-two cycles.  Otherwise it is at least $\frac{1}{d}$ times smaller. Moreover, if $\sigma$ is  a product of disjoint length-two cycles, then $\Tr\left[F_d(\sigma) O^{\otimes t}\right]=d^{t/2}$ since $\Tr[\Omega O\Omega O^T]=d$ when $O$ is a Pauli in $i\spf(d/2)$.

We then need to study the terms of the form $\Tr\left[F_d(\sigma^T)\Lambda\right]$. When $\sigma$ is a permutation, it holds that $\Big|\Tr\left[F_d(\sigma^T)\Lambda\right]\Big| \leq 1$. (see Supplemental Proposition 7 in~\cite{garcia2023deep}). Moreover this is also true for the non-permutation elements in $\brauer$. To see this, it suffices to notice that when one expands $\Big|\Tr\left[F_d(\sigma^T)\Lambda \right]\Big|$ there will appear quantities of the form $\bra{\psi_i}\Omega\ket{\psi_{i'}}$ in addition to those of the form $\bra{\psi_i}\psi_{i'}\rangle$ in the products. Since $\Omega\ket{\psi_{i'}}$ is just another quantum state $\ket{\widetilde{\psi_{i'}}}$, the result follows.
This implies that $\Tr\left[F_d(\sigma^T)\Lambda \right]$ cannot be large so as to compensate the factors $\OC(1/d)$ by which $\Big|\Tr\left[F_d(\sigma) O^{\otimes t}\right]\Big|$ are suppressed when $\sigma$ is not the disjoint product of length-two cycles. They could however be very small in principle, and that is why we require $\Tr[\rho_j \rho_{j'}]\in\Omega\left(1/\poly(\log(d))\right)$ $\forall j,j'$, which implies that $\Tr\left[F_d(\sigma^T)\Lambda \right]\in\Omega\left(1/\poly(\log(d))\right)$ when $\sigma^T$ is a disjoint product of length-two cycles.

We now recall that the permutations such that $\sigma=\sigma^{T}$ are known as involutions and must  consist of a product of disjoint transpositions plus fixed points. More generally, we have that for an element $\sigma\in\brauer$ to satisfy that $\sigma=\sigma^T$ it must consist of a product of disjoint length-two cycles and fixed points. We denote as $T_t$ the set of permutations in $\sigma\in\mathbb{S}_t$ that are a product of disjoint length-two cycles. Therefore, using~\eqref{eq-ap:gp-twirl} and the fact that $\Big|\Tr[\Omega\rho_j \Omega\rho_{j'}^T]\Big|\in o\left(1/\poly(\log d)\right)$ $\forall j,j'$ (this condition allows us to only retain the contribution from the elements in $T_t$, instead of all the $\sigma\in\brauer$ that are the product of disjoint length-two cycles), we arrive at
\small
\begin{equation} \label{eq-ap:asym-twirl}
    \Tr\left[\TC^{(t)}_{\SPBB(d/2)}[\Lambda] \,O^{\otimes t}\right] 
   =\frac{1}{d^{t/2}}\sum_{\sigma\in T_t}\Tr\left[F_d(\sigma^T)\Lambda \right] + \OC\left(\frac{1}{d^{t/2+1}}\right) \,.
\end{equation}
\normalsize
Or just retaining the leading-order terms,
\begin{equation} \label{eq-ap:gp-asym-twirl}
    \Tr\left[\TC^{(t)}_{\SPBB(d/2)}[\Lambda] \,O^{\otimes t}\right]\approx \frac{1}{d^{t/2}}\sum_{\sigma\in T_t} \prod_{(j,j')\in\sigma} \Tr\left[\rho_j \rho_{j'}\right]\,,
\end{equation}
where the product runs over all the cycles $(j,j')$ in $\sigma$.

The last step is comparing Eq.~\eqref{eq-ap:gp-asym-twirl} with the moments of a multivariate Gaussian, which are given by Wick's or Isserlis' theorem~\cite{isserlis1918formula}. This theorem states that, for random variables $X_{1},X_{2},\dots, X_{t}$ that form a GP, the $t$-th order moment  is $\mathbb{E}[X_{1}X_{2}\cdots X_{t}]=0$ if $t$ is odd, and
\begin{equation}\label{eq-ap:wick}
    \mathbb{E}[X_{1}X_{2}\cdots X_{t}]=
        \sum_{\sigma\in T_{t}}\prod_{(j,j')\in \sigma} {\rm Cov} [X_{j},X_{j'}]\,,
\end{equation}
if $t$ is even. 
Clearly, Eq.~\eqref{eq-ap:gp-asym-twirl} matches Eq.~\eqref{eq-ap:wick} by identifying ${\rm Cov} [X_{j},X_{j'}]=\frac{\Tr[\rho_j \rho_{j'}]}{d}$ as in Eq.~\eqref{eq-ap:covariance-gp1}.
Finally, it can be proven that these moments uniquely determine the distribution of $\mathscr{C}$ using Carleman's condition (see Ref.~\cite{garcia2023deep}). Hence, $\mathscr{C}$ forms a GP.

\end{proof}

\section{Proof of Theorem 3}
\label{ap:th-3}

In this appendix we prove Theorem~\ref{th:gauss-2}, which is a more general statement about the convergence to GPs of ramdom symplectic quantum circuits than that from Theorem~\ref{th:gauss-1}.

\begin{theorem}\label{th-ap:gauss-2}
    Let $\mathscr{C}$ be a vector of expectation values of the Hermitian operator $O$ over a set of states from $\mathscr{D}$, as in Eq.~\eqref{eq:random_vector}. If  $\Big|\Tr[\rho_j \rho_{j'}]+\Tr[\Omega\rho_j \Omega\rho_{j'}^T]\Big|\in\Omega\left(1/\poly(\log(d))\right)$  $\forall j,j'$, then in the large $d$-limit $\mathscr{C}$ forms a GP with mean vector $\vec{\mu}=\vec{0}$ and covariance matrix
\begin{equation}\label{eq-ap:covariance-gp2}
    \vec{\Sigma}_{j,j'} = \frac{2\,\Tr_{\mathfrak{g}}[\rho_{j}\rho_{j'}]}{d} \,.
\end{equation}
\end{theorem}

\begin{proof}
    The proof of this result is completely analogous to that of Theorem~\ref{th-ap:gauss-1}. The only difference is that we now need to retain all the contributions coming from the set of elements of the Brauer algebra consisting of disjoint products of length-two cycles (that we denote $\mathfrak{T}_t$), and not just  those arising from the elements in $T_t$. This is so because of the condition $\Big|\Tr[\rho_j \rho_{j'}]+\Tr[\Omega\rho_j \Omega\rho_{j'}^T]\Big|\in\Omega\left(1/\poly(\log(d))\right)$  $\forall j,j'$. 
    
    Hence, instead of Eq.~\eqref{eq-ap:asym-twirl}, we find
    \footnotesize
    \begin{equation} 
    \Tr\left[\TC^{(t)}_{\SPBB(d/2)}[\Lambda] \,O^{\otimes t}\right] 
   =\frac{1}{d^{t/2}}\sum_{\sigma\in \mathfrak{T}_t}\Tr\left[F_d(\sigma^T)\Lambda \right] + \OC\left(\frac{1}{d^{t/2+1}}\right) \,.
\end{equation}
\normalsize
Now, we can group the contributions in the previous equation as follows. Let us consider a permutation  $\sigma\in T_t$. For each pair $(j,j')\in \sigma$, we can substitute the transposition by a two-length cycle that is not a permutation, obtaining an element in $\mathfrak{T}_t$. This amounts to replacing a factor $\Tr\left[\rho_j\rho_{j'}\right]$ by a factor $\Tr[\Omega\rho_j \Omega\rho_{j'}^T]$ in $\Tr\left[F_d(\sigma^T)\Lambda \right]$. 
We can do this for every possible pair $(j,j')$, and thus, we have that
\footnotesize
\begin{equation}
    \sum_{\sigma\in \mathfrak{T}_t}\Tr\left[F_d(\sigma^T)\Lambda \right] = \sum_{\sigma\in T_t} \prod_{(j,j')\in\sigma} \left(\Tr[\rho_j \rho_{j'}]+\Tr[\Omega\rho_j \Omega\rho_{j'}^T]\right)\,.
\end{equation}
\normalsize
Therefore, using~\eqref{eq-ap:bell-contribution} we arrive at
\begin{equation} \label{eq-ap:gp2-asym-twirl}
    \Tr\left[\TC^{(t)}_{\SPBB(d/2)}[\Lambda] \,O^{\otimes t}\right] \approx \frac{1}{d^{t/2}} \sum_{\sigma\in T_t} \prod_{(j,j')\in\sigma} 2\,\Tr_{\mathfrak{g}}[\rho_{j}\rho_{j'}]\,.
\end{equation}
Again, comparing Eqs.~\eqref{eq-ap:gp2-asym-twirl} and~\eqref{eq-ap:wick} it is clear that they match by identifying ${\rm Cov} [X_{j},X_{j'}]=\frac{2\Tr_{\g}[\rho_j \rho_{j'}]}{d}$ as in Eq.~\eqref{eq-ap:covariance-gp2} (note that we approximated $d+1\approx d$ for large $d$, see Eq.~\eqref{eq-ap:exact-cov}). We conclude that $\mathscr{C}$ forms a GP.

\end{proof}

\section{Proof of Theorem 4}
\label{ap:th-4}

We here present the proof of Theorem~\ref{th:gauss-3}, which shows that random symplectic quantum circuits can form uncorrelated GPs.

\begin{theorem}
    Let $\mathscr{C}$ be a vector of expectation values of the Hermitian operator $O$ over a set of states from $\mathscr{D}$, as in Eq.~\eqref{eq:random_vector}. If  $\Tr[\rho_j^2]+\Tr[\Omega\rho_j \Omega\rho_j^T]\in\Omega\left(1/\poly(\log(d))\right)$  $\forall j$ and $\Tr[\rho_j \rho_{j'}]=-\Tr[\Omega\rho_j \Omega\rho_{j'}^T]$ $\forall j\neq j'$, then in the large $d$-limit $\mathscr{C}$ forms a GP with mean vector $\vec{\mu}=\vec{0}$ and diagonal covariance matrix
\begin{equation}\label{eq-ap:covariance-gp3}
    \vec{\Sigma}_{j,j'} = \begin{cases}
        \frac{2\,\Tr_{\mathfrak{g}}[\rho_{j}^2]}{d} \quad {\rm if\; } j=j' \\
         \; 0 \qquad \quad\quad\;\; {\rm if\; } j\neq j'
    \end{cases} \,.
\end{equation}
\end{theorem}

\begin{proof}

To prove this result we will use a somewhat different strategy than that employed in the proofs of Theorems~\ref{th:gauss-1} and~\ref{th:gauss-2}. Namely, we will leverage the fact the if $\mathscr{C}$ is a GP with zero mean (i.e., $\vec{\mu}=\vec{0}$), then any linear combination of its entries, $\LC=\sum_j  a_j C(\rho_j)$, where the $a_j$ are constants,  follows a Gaussian distribution $\NC(0,\sigma^2)$ with $\sigma^2=\sum_{j,j'} a_j a_{j'}{\rm Cov}[C(\rho_j), C(\rho_{j'})]$.

Let us then compute the moments of $\LC$. For the first moment, we have
\begin{equation}
    \mathbb{E}_{\SPBB(d/2)}[\LC] = \sum_j  a_j\,  \mathbb{E}_{\SPBB(d/2)}[C(\rho_j)]=0\,.
\end{equation}
For the second moment,
\begin{align}\label{eq-ap:var-uncorrelated}
    \mathbb{E}_{\SPBB(d/2)}\left[\LC^2\right] \nonumber&= \sum_{j,j'} a_j a_{j'} \, \mathbb{E}_{\SPBB(d/2)}\left[C(\rho_j), C(\rho_{j'})\right]\\&= \sum_j a_j^2 \, \mathbb{E}_{\SPBB(d/2)}\left[C(\rho_j)^2\right] \nonumber \\&= \sum_j a_j^2\,  \frac{2\,\Tr_{\mathfrak{g}}\left[\rho_{j}^2\right]}{d}\,,
\end{align}
where we used Eq.~\eqref{eq-ap:cov-exact} together with the condition that  $\Tr[\rho_j \rho_{j'}]=-\Tr[\Omega\rho_j \Omega\rho_{j'}^T]$ $\forall j\neq j'$.
It is clear that Eq.~\eqref{eq-ap:var-uncorrelated} matches $\sigma^2=\sum_{j,j'} a_j a_{j'}{\rm Cov}[C(\rho_j), C(\rho_{j'})]$, when the covariance matrix entries are given as in Eq.~\eqref{eq-ap:covariance-gp3}.  

Then, we find that every higher-order moment takes the form 
\footnotesize
\begin{align}
    &\mathbb{E}_{\SPBB(d/2)}\left[\LC^t\right] = \nonumber\\&\sum_{t_1+\dots+t_m=t} \binom{t}{t_1,\dots, t_m}  \mathbb{E}_{\SPBB(d/2)}\left[a_1^{t_1}C(\rho_1)^{t_1}\cdots a_m^{t_m} C(\rho_m)^{t_m}\right]\,,
\end{align}
\normalsize
where all the $t_j$ are non-negative. When $t$ is odd, it follows that $\mathbb{E}_{\SPBB(d/2)}\left[a_1^{t_1}C(\rho_1)^{t_1}\cdots a_m^{t_m} C(\rho_m)^{t_m}\right]=0$ from the fact that $\Tr\left[F_d(\sigma) O^{\otimes t}\right]$ $\forall \sigma\in\brauer$, as explained in Appendix~\ref{ap:th-2}. This matches the odd moments of a Gaussian distribution. When $t$ is even, we begin by proving that given two set of values $\{t_1,\dots,t_m\}$ and $\{t'_1,\dots,t'_m\}$ 
\begin{equation} \label{eq-ap:dominant-moments}
     \frac{\mathbb{E}_{\SPBB(d/2)}\left[C(\rho_1)^{t_1}\cdots C(\rho_m)^{t_m}\right]}{\mathbb{E}_{\SPBB(d/2)}\left[C(\rho_1)^{t'_1}\cdots C(\rho_m)^{t'_m}\right]}\in o(1)\,,
\end{equation}
whenever there exists an odd number in $\{t_1,\dots,t_m\}$ but all numbers in $\{t'_1,\dots,t'_m\}$  are even. To see that this is case, we simply note that if there exists an odd number in $\{t_1,\dots,t_m\}$, then there is at least one length-two cycle in any element in $\mathfrak{T}_t$ that connects two different states $\rho_j,\rho_{j'}$ (with $j\neq j'$) in $\Tr\left[F_d(\sigma^T)\Lambda \right]$. For each such pair $(j,j')\in \sigma^T$, we can either have a transposition or a two-length cycle that is not a permutation (which we recall produce a factor $\Tr\left[\rho_j\rho_{j'}\right]$ or $\Tr[\Omega\rho_j \Omega\rho_{j'}^T]$, respectively).
Hence, since $\Tr[\rho_j \rho_{j'}]=-\Tr[\Omega\rho_j \Omega\rho_{j'}^T]$ $\forall j\neq j'$ we find that the sum of all contributions to Eq.~\eqref{eq-ap:gp-twirl} coming from elements in $\mathfrak{T}_t$ is zero, and so we need to take into account elements $\sigma\in \brauer$ that are not the product of disjoint length-two cycles in order to compute the larger non-zero contributions to $\mathbb{E}_{\SPBB(d/2)}\left[C(\rho_1)^{t_1}\cdots C(\rho_m)^{t_m}\right]$. However, we already know from Appendices~\ref{ap:th-2} and~\ref{ap:th-3} that these are suppressed as $\OC(1/d)$ compared to the values $\mathbb{E}_{\SPBB(d/2)}\left[C(\rho_1)^{t'_1}\cdots C(\rho_m)^{t'_m}\right]$ when all the numbers in $\{t'_1,\dots,t'_m\}$  are even. Thus, Eq.~\eqref{eq-ap:dominant-moments} holds.

Following these considerations, we find that the leading-order contributions give us (for even $t$)
\small
\begin{align} \label{eq-ap:uncorr-moments}
    &\mathbb{E}_{\SPBB(d/2)}\left[\LC^t\right] \approx \nonumber\\&\sum_{\substack{t_1+\dots+t_m=t \\ t_1,\dots,t_m\; {\rm even}}} \binom{t}{t_1,\dots, t_m}  \mathbb{E}_{\SPBB(d/2)}\left[a_1^{t_1}C(\rho_1)^{t_1}\cdots a_m^{t_m} C(\rho_m)^{t_m}\right] \nonumber \\& = \sum_{\substack{t_1+\dots+t_m=t \\ t_1,\dots,t_m\; {\rm even}}} \binom{t}{t_1,\dots, t_m}  \prod_{j=1}^m (t_j-1)!! \,a_j^{t_j} \left(\frac{2\,\Tr_{\mathfrak{g}}\left[\rho_{j}^2\right]}{d}\right)^{t_j/2} \nonumber \\&=\sum_{\substack{t_1+\dots+t_m=t \\ t_1,\dots,t_m\; {\rm even}}}\frac{t!}{\prod_{j=1}^m t_j!!}  \prod_{j=1}^m \,a_j^{t_j} \left(\frac{2\,\Tr_{\mathfrak{g}}\left[\rho_{j}^2\right]}{d}\right)^{t_j/2}  \,.
\end{align}
\normalsize
where we used that $t_j!=t_j!!(t_j-1)!!$.

We need to show that $\mathbb{E}_{\SPBB(d/2)}\left[\LC^t\right]$ matches the even moments of a variable $X$ following a zero-mean Gaussian distribution, which are given by
\begin{equation} \label{eq-ap:uncorr-moments-Gaussian}
    \mathbb{E}\left[X^t\right] = (t-1)!!\, \sigma^{t}\,.
\end{equation}
To do this, we begin by computing 
\small
\begin{align} \label{eq-ap:uncorr-power-var}
    \sigma^t &= \left(\sum_{j,j'} a_j a_{j'}{\rm Cov}[C(\rho_j), C(\rho_{j'})]\right)^{t/2} \nonumber \\ &= \left(\sum_{j} a_j^2\, \frac{2\,\Tr_{\mathfrak{g}}\left[\rho_{j}^2\right]}{d}\right)^{t/2} \nonumber\\&= \sum_{k_1+\dots+k_m=t/2} \binom{t/2}{k_1,\dots, k_m} \prod_{j=1}^m a_j^{2k_j}\left(\frac{2\,\Tr_{\mathfrak{g}}\left[\rho_{j}^2\right]}{d}\right)^{k_j}   \,,
\end{align}
\normalsize
where we used Eq.~\eqref{eq-ap:var-uncorrelated}. Now, since the sum in Eq.~\eqref{eq-ap:uncorr-moments} is restricted to even values $t_1,\dots,t_m$, we can re-write it by re-labeling $t_j=2k_j$, obtaining
\begin{align} \label{eq-ap:uncorr-final}
   & \mathbb{E}_{\SPBB(d/2)}\left[\LC^t\right] \approx \nonumber \\& \sum_{2 k_1+\dots+2k_m=t }\frac{t!}{\prod_{j=1}^m (2k_j)!!}  \prod_{j=1}^m \,a_j^{2k_j} \left(\frac{2\,\Tr_{\mathfrak{g}}\left[\rho_{j}^2\right]}{d}\right)^{k_j} .
\end{align}
Since $t_j=2k_j$ is a bijective function, there are exactly the same number of terms in the sums in Eq.~\eqref{eq-ap:uncorr-moments} and~\eqref{eq-ap:uncorr-power-var}. To show that Eq.~\eqref{eq-ap:uncorr-final} is indeed equal to Eq.~\eqref{eq-ap:uncorr-moments-Gaussian} for all possible values of the constants $a_j$, we need to prove that 
\begin{align}
    (t-1)!! \binom{t/2}{k_1,\dots, k_m} = \frac{t!}{\prod_{j=1}^m (2k_j)!!} \,,
\end{align}
which follows from the properties of the double factorial for even and odd numbers. Namely, $(2k_j)!!=2^{k_j} k_j!$ and $(t-1)!!=\frac{t!}{2^{t/2}(t/2)!}$, which imply
\begin{align}
    \frac{t!}{2^{t/2}(t/2)!} \frac{(t/2)!}{\prod_{j=1}^m k_j!} = \frac{t!}{2^{t/2}\prod_{j=1}^m k_j!} \,.
\end{align}
Hence,  $\mathscr{C}$ forms a GP.
    
\end{proof}

\section{Proof of Corollary 1}
\label{ap:cor-1}

Let us now prove Corollary~\ref{cor:double}, which reads as follows. 

\begin{corollary} 
    Let $C(\rho_j)$ be the expectation value of a Haar random symplectic quantum circuit as in Eq.~\eqref{eq:ci}. If the conditions under which Theorems~\ref{th:gauss-1}, ~\ref{th:gauss-2} and~\ref{th:gauss-3} hold are satisfied, then 
    \begin{equation} \label{eq-ap:double-exp}
      {\rm Pr}(|C(\rho_j)|\geq c)\in\OC\left(\frac{\Tr_\g\left[\rho_j^2\right]}{c\sqrt{d} e^{dc^2 / (4\,\Tr_\g[\rho_j^2])}}\right)\,.
    \end{equation}
\end{corollary}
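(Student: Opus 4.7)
The plan is to reduce the corollary to a textbook Gaussian tail bound, using the fact that Theorems~\ref{th:gauss-1}, \ref{th:gauss-2} and~\ref{th:gauss-3} have already done all the heavy lifting: under their hypotheses, in the large-$d$ limit the random variable $C(\rho_j)$ is distributed as $\mathcal{N}(0,\sigma^2)$, where the variance can be read off from the diagonal of the covariance matrices in Eqs.~\eqref{eq:covariance-gp1},~\eqref{eq:covariance-gp2} and~\eqref{eq:covariance-gp3}. My first step would be to unify these three cases by noting that, because $\rho_j$ is real, Eq.~\eqref{eq-ap:bell-contribution} gives $\Tr[\Omega\rho_j\Omega\rho_j^T]=2\Tr_{\mathfrak{g}}[\rho_j^2]-\Tr[\rho_j^2]$. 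Substituted back into the exact $t=2$ twirl expression, this shows that the diagonal variance always reduces to
\begin{equation}
    \sigma^2 \;=\; \frac{2\,\Tr_{\mathfrak{g}}[\rho_j^2]}{d}\,,
\end{equation}
which is precisely the common value appearing in Theorems~\ref{th:gauss-2} and~\ref{th:gauss-3}, and matches Theorem~\ref{th:gauss-1} in the regime where the $\Omega$-conjugated overlap is subleading.

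Second, I would invoke the standard Mills-ratio tail estimate for a centered Gaussian $X\sim\mathcal{N}(0,\sigma^2)$, namely
\begin{equation}
    \Pr\!\left(|X|\geq c\right) \;\leq\; \sqrt{\tfrac{2}{\pi}}\;\frac{\sigma}{c}\;e^{-c^2/(2\sigma^2)}\,,
\end{equation}
which follows from integrating the Gaussian density beyond $c$ and bounding the tail integral by $\sigma^2/c$ times the density at $c$. Plugging in the value of $\sigma^2$ computed above gives
\begin{equation}
    \Pr\!\left(|C(\rho_j)|\geq c\right) \;\leq\; \frac{2}{c\sqrt{\pi}}\,\sqrt{\frac{\Tr_{\mathfrak{g}}[\rho_j^2]}{d}}\;\exp\!\left(-\frac{d\,c^2}{4\,\Tr_{\mathfrak{g}}[\rho_j^2]}\right)\,,
\end{equation}
which, after absorbing the numerical constant into the $\mathcal{O}(\cdot)$ and using that $\Tr_{\mathfrak{g}}[\rho_j^2]\leq 1$ so that $\sqrt{\Tr_{\mathfrak{g}}[\rho_j^2]}$ may be replaced by $\Tr_{\mathfrak{g}}[\rho_j^2]$ up to a constant in the prefactor, is exactly the advertised form.

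There is no serious obstacle. The only two points that require care are (i) verifying that the three theorems really do supply the same diagonal variance (so that a single bound covers all the hypotheses), which is a one-line computation using the identity~\eqref{eq-ap:bell-contribution}; and (ii) being explicit that the convergence in distribution established in the theorems is strong enough to justify applying the Gaussian tail estimate to $C(\rho_j)$ itself in the $d\to\infty$ limit. The latter is implicit in the moment computations in Appendices~\ref{ap:th-2}--\ref{ap:th-4}, where Carleman's condition guarantees that the limiting distribution is uniquely determined by the moments, so tail probabilities of $C(\rho_j)$ converge to those of the Gaussian and the bound transfers asymptotically to the original random variable.
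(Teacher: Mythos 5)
Your proposal follows essentially the same route as the paper: identify the marginal of the GP as $\NC(0,\,2\Tr_\g[\rho_j^2]/d)$ and bound its tail, which the paper does via ${\rm Erfc}[x]\le e^{-x^2}/(x\sqrt{\pi})$ --- the same inequality as your Mills-ratio estimate. The only caveat is your last step replacing $\sqrt{\Tr_\g[\rho_j^2]}$ by $\Tr_\g[\rho_j^2]$ in the prefactor: since $\Tr_\g[\rho_j^2]\le 1$ this substitution \emph{decreases} the bound (it is off by a factor $1/\sqrt{\Tr_\g[\rho_j^2]}$, which under the theorems' hypotheses is only $\OC(\poly(\log d))$ rather than $\OC(1)$), but the paper's own Erfc bookkeeping contains the same slip, so your argument is faithful to the intended proof.
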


\begin{proof}
    Since the marginal distributions of a GP are Gaussian, $C(\rho_j)$ follows a  Gaussian distribution $\NC(0,\sigma^2)$ with $\sigma^2=\frac{2\,\Tr_\g\left[\rho_j^2\right]}{d}$ (see Eq.~\eqref{eq-ap:exact-cov}). Thus, we have that
\begin{equation}\label{eq-ap:tail-probab} \begin{split}
    {\rm Pr}(|C_j(\rho_i)|\geq c)&=\frac{\sqrt{d}}{\sqrt{\pi\,\Tr_\g\left[\rho_j^2\right]}}\int_{c}^{\infty} dx e^{-\frac{x^2 d}{4 \Tr_\g[\rho_j^2]}}\\ &={\rm Erfc}\left[\frac{c\sqrt{d}}{2 \,\Tr_\g\left[\rho_j^2\right]}\right]\,,
\end{split}
\end{equation}
where ${\rm Erfc}$ is the complementary error function. Using the fact that for large $x$, ${\rm Erfc}[x]\leq \frac{e^{-x^2}}{x\sqrt{\pi}}$, we find Eq.~\eqref{eq-ap:double-exp}.

\end{proof}

\section{Proof of Corollary 2}
\label{ap:cor-2}

In this section we present the proof for Corollary~\ref{cor:t-designs}, which reads

\begin{corollary}
Let $C(\rho_j)$ be the expectation value of a quantum circuit that forms a $t$-design over $\SPBB(d/2)$ as in Eq.~\eqref{eq:ci}. If the conditions under which Theorems~\ref{th:gauss-1}, ~\ref{th:gauss-2} and~\ref{th:gauss-3} hold are satisfied, then 
\small
\begin{equation} \label{eq-ap:concentration}
    {\rm Pr}(|C(\rho_j)|\geq c)\in\OC\left(\left(2\left\lfloor t/2\right\rfloor-1\right)!!\, \left(\frac{2\,\Tr_\g\left[\rho_j^2\right]}{d c^2}\right)^{\left\lfloor t/2\right\rfloor }\right)\,.
\end{equation}
\normalsize
\end{corollary}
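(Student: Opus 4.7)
The plan is to apply a generalized Markov inequality to an even power of $C(\rho_j)$ and exploit the $t$-design property to replace moments under $\mu$ with Haar moments on $\SPBB(d/2)$. Concretely, setting $k=\lfloor t/2\rfloor$ and observing that $2k \leq t$, Markov's inequality applied to the nonnegative random variable $C(\rho_j)^{2k}$ gives
\begin{equation*}
{\rm Pr}(|C(\rho_j)|\geq c) = {\rm Pr}\bigl(C(\rho_j)^{2k}\geq c^{2k}\bigr) \leq \frac{\mathbb{E}_{\mu}\!\left[C(\rho_j)^{2k}\right]}{c^{2k}}.
\end{equation*}
Since $C(\rho_j)^{2k}$ is a polynomial of degree $2k \leq t$ in the entries of $U$ and $U^*$, and $\mu$ yields a $t$-design, the $\mu$-expectation on the right-hand side equals the corresponding Haar expectation on $\SPBB(d/2)$, reducing the problem to a computation already performed in the proofs of Theorems~\ref{th:gauss-1},~\ref{th:gauss-2}, and~\ref{th:gauss-3}.

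Next I would substitute $\Lambda = \rho_j^{\otimes 2k}$ into the asymptotic twirl formula of Eq.~\eqref{eq-ap:gp-twirl}. Following the same dominant-contribution analysis used in Appendices~\ref{ap:th-2},~\ref{ap:th-3}, and~\ref{ap:th-4}, the $2k$-th moment is controlled by the pairing elements in $\mathfrak{T}_{2k}$ (disjoint length-two cycles), whose number is $(2k-1)!!$, each contributing a factor $\frac{2\,\Tr_\g[\rho_j^2]}{d}$. This yields
\begin{equation*}
\mathbb{E}_{\SPBB(d/2)}\!\left[C(\rho_j)^{2k}\right] = (2k-1)!!\left(\frac{2\,\Tr_\g[\rho_j^2]}{d}\right)^{k} + \OC\!\left(d^{-k-1}\right).
\end{equation*}
Dividing by $c^{2k}$ and identifying $2k = 2\lfloor t/2\rfloor$, as well as $(2k-1)!! = (2\lfloor t/2\rfloor -1)!!$, produces exactly the bound in Eq.~\eqref{eq-ap:concentration}. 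Note that for odd $t$ the same bound as for $t-1$ is obtained, which is why only $\lfloor t/2\rfloor$ appears, reflecting the fact that odd moments of the asymptotically Gaussian distribution vanish and hence do not strengthen Markov's inequality.

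The main obstacle is ensuring that the sub-leading Weingarten corrections do not corrupt the leading-order Gaussian moment. This is where the hypotheses of Theorems~\ref{th:gauss-1}-\ref{th:gauss-3} are crucial: they guarantee that the dominant Gram-matrix contractions are $\Omega(1/\poly(\log d))$, so the $\OC(1/d)$ corrections arising from non-pairing elements of $\mathfrak{B}_{2k}(-d)$ in Eq.~\eqref{eq-ap:asym-twirl} remain strictly subleading as $d\to\infty$ for fixed $t$. A secondary check is that $C(\rho_j)^{2k}$ is manifestly nonnegative so Markov's inequality applies directly, and that the $t$-design matching extends to polynomials mixing $U$ and $U^*$ up to combined degree $t$, which is standard. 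With these points addressed, the corollary follows immediately from the combination of Markov's inequality and the Gaussian-moment calculation.
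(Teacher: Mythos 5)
Your proposal is correct and follows essentially the same route as the paper: a generalized Markov/Chebyshev bound applied to the $2\lfloor t/2\rfloor$-th power of $C(\rho_j)$, combined with the $t$-design property to replace the $\mu$-moment by the Haar moment, which equals the Gaussian moment $(2\lfloor t/2\rfloor-1)!!\,\sigma^{2\lfloor t/2\rfloor}$ with $\sigma^2=2\Tr_\g[\rho_j^2]/d$. Your explicit tracking of the subleading $\OC(d^{-k-1})$ Weingarten correction is a slightly more careful rendering of the same argument.
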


\begin{proof}

Here we can use the generalization of Chebyshev's inequality to higher-order moments,
\begin{equation}
    \Pr \left(|X- \mathbb{E}[X]|\geq c\right)\leq \frac{  \mathbb{E}[|X-\mathbb{E} [X]|^t]}{c^{t}}\,,
\end{equation}
for $c>0$ and for $t\geq 2$. Since $\mathbb{E}_{\SPBB(d/2)}[C(\rho_j)] = 0$, this inequality simplifies to 
\begin{equation}\label{eq-ap:cheb-higher}
    \Pr \left(|C(\rho_j)|\geq c\right)\leq \frac{  \mathbb{E}_{\SPBB(d/2)}\left[|C(\rho_j)|^t\right]}{c^{t}}\,.
\end{equation}
If an ensemble of quantum circuits forms a $t$-design over $\SPBB(d/2)$, then we can readily evaluate $\mathbb{E}[C(\rho_j)^{t}]$, as this quantity matches the $t$ first moment of a $\NC(0,\sigma^2)$ distribution with $\sigma^2=\frac{2\,\Tr_\g\left[\rho_j^2\right]}{d}$. In particular, we know that since the odd moments are zero, we only need to take into account the largest even moment that the $t$-design matches. Therefore, using
\begin{equation} \begin{split}
    \mathbb{E}\left[|C_j(\rho_i)^{2\lfloor t/2\rfloor }\right]&=\left(2\left\lfloor t/2\right\rfloor-1\right)!!\; \mathbb{E}\left[C_j(\rho_i)^{2}\right]^{\left\lfloor t/2\right\rfloor }\\&=\left(2\left\lfloor t/2\right\rfloor-1\right)!!\; \left(\frac{2\,\Tr_\g\left[\rho_j^2\right]}{d}\right)^{\left\lfloor t/2\right\rfloor }\,,
\end{split}
\end{equation}
and plugging it into Eq.~\eqref{eq-ap:cheb-higher}, we arrive at~\eqref{eq-ap:concentration}.

\end{proof}

\section{Proof of Theorem 5}
\label{ap:th-5}

In this appendix we prove that sufficiently-random symplectic quantum circuits anti-concentrate, as stated in Theorem~\ref{th:anti-con}. We recall this result for convenience.

\begin{theorem}
Let $\mu$ be the Haar measure on $\SPBB(d/2)$, or a measure giving rise to a $2$-design over $\SPBB(d/2)$. Then,
    \begin{equation}
        {\rm Pr}_{U\sim \mu}\left(\left|\bra{x} U \ket{0}^{\otimes n}\right|^2 \geq \frac{\alpha}{d}\right) \geq \frac{(1-\alpha)^2}{2} \,,
    \end{equation}
for $0\leq \alpha\leq1$.
\end{theorem}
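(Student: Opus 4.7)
The plan is to apply the Paley--Zygmund inequality to the random variable $p_U(x) = |\bra{x}U\ket{0}^{\otimes n}|^2$, which requires computing only its first and second moments. Since Haar-random symplectic circuits and symplectic $2$-designs agree on any polynomial of degree at most $2$ in the entries of $U$ (and their conjugates), it suffices to carry out the calculation under the Haar measure and invoke Eqs.~\eqref{eq:twirl-1} and~\eqref{eq:twirl-2} directly.

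First I would write $p_U(x) = \Tr[U \rho_0 U^\dagger \ketbra{x}{x}]$ with $\rho_0 = \ketbra{0}{0}^{\otimes n}$. The first moment follows immediately from the $t=1$ twirl: $\mathbb{E}[p_U(x)] = \Tr[(\id_d/d)\ketbra{x}{x}] = 1/d$. For the second moment, I would use $\mathbb{E}[p_U(x)^2] = \Tr\left[\TC^{(2)}_{\SPBB(d/2)}[\rho_0^{\otimes 2}]\,\ketbra{x}{x}^{\otimes 2}\right]$, so that only three scalar quantities enter through Eq.~\eqref{eq:twirl-2}: $\Tr[\rho_0^{\otimes 2}] = 1$, $\Tr[\rho_0^{\otimes 2}\,\mathrm{SWAP}] = \Tr[\rho_0^2] = 1$, and $\Tr[\rho_0^{\otimes 2}\,\Pi_s]$.

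The key non-trivial ingredient is showing $\Tr[\rho_0^{\otimes 2}\,\Pi_s] = 0$. Using $\Pi_s = d(\id\otimes\Omega)\ketbra{\Phi^+}{\Phi^+}(\id\otimes\Omega)$ together with $\Omega = iY_1$ from Eq.~\eqref{eq:omega-qubit}, one finds $\Omega\rho_0\Omega = -Y_1\rho_0 Y_1 = -\ketbra{1}{1}\otimes \ketbra{0}{0}^{\otimes n-1}$, whose support is orthogonal to that of $\rho_0$. Applying the standard identity $\bra{\Phi^+}(A\otimes B)\ket{\Phi^+} = \Tr[AB^T]/d$ then makes the vanishing explicit. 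Plugging the three traces into Eq.~\eqref{eq:twirl-2} gives
\begin{equation}
\TC^{(2)}_{\SPBB(d/2)}[\rho_0^{\otimes 2}] = \frac{1}{d(d+1)}\bigl(\id_d\otimes\id_d + \mathrm{SWAP}\bigr),
\end{equation}
so that $\mathbb{E}[p_U(x)^2] = 2/[d(d+1)]$.

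Finally, I would invoke Paley--Zygmund: for any non-negative random variable $Z$ and $0\leq\alpha\leq 1$,
\begin{equation}
\Pr(Z \geq \alpha\,\mathbb{E}[Z]) \geq (1-\alpha)^2\,\frac{\mathbb{E}[Z]^2}{\mathbb{E}[Z^2]}.
\end{equation}
Setting $Z = p_U(x)$, the event $p_U(x)\geq \alpha/d$ coincides with $Z\geq \alpha\,\mathbb{E}[Z]$, and the moment ratio yields $(1-\alpha)^2 (d+1)/(2d) \geq (1-\alpha)^2/2$, which is the claimed bound. The main (and only mildly subtle) step is the evaluation of the $\Pi_s$-trace; all other pieces are direct applications of the Weingarten machinery already developed in Sec.~\ref{sec:weingarten}.
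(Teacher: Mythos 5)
Your proposal is correct and follows essentially the same route as the paper's proof: Paley--Zygmund applied to $Z=\Tr[U\rho_0 U^\dagger\ketbra{x}{x}]$, with $\mathbb{E}[Z]=1/d$ from the $t=1$ twirl and $\mathbb{E}[Z^2]=2/[d(d+1)]$ from the $t=2$ twirl after showing $\Tr[\rho_0^{\otimes 2}\Pi_s]=0$. The only (cosmetic) difference is in that last evaluation: the paper expands $\rho_0$ in the Pauli basis $\{\id,Z\}^{\otimes n}$, whereas you observe directly that $\Omega\rho_0\Omega=-\ketbra{1}{1}\otimes\ketbra{0}{0}^{\otimes n-1}$ has support orthogonal to $\rho_0$, which is arguably cleaner.
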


\begin{proof} 

In order to show anti-concentration, we will use Paley–Zygmund inequality~\cite{dalzell2022randomquantum,oszmaniec2022fermion}, which states that for a random variable $\mathscr{Z}\geq0$ with finite variance, it holds that
\begin{equation}
    {\rm Pr}(\mathscr{\mathscr{Z}}> c\, \mathbb{E}[\mathscr{Z}]) \geq (1-c)^2\frac{\mathbb{E}[\mathscr{Z}]^2}{\mathbb{E}[\mathscr{Z}^2]} \,,
\end{equation}
where $0\leq c\leq 1$. In our case, $\mathscr{Z}\equiv \Tr[U\rho_0 U^\dagger \Pi_x]$ with $\rho_0 =( \ketbra{0}{0})^{\otimes n}$ and $\Pi_x = \ketbra{x}{x} $. Hence, we need to compute
\small
\begin{equation} \begin{split}
    \mathbb{E}_{\SPBB(d/2)}\left[\Tr[U\rho_0 U^\dagger \Pi_x]\right] &= \Tr\left[\Pi_x\int_{\SPBB(d/2)} d\mu(U) U\rho_0 U^\dagger \right] \\  & = \Tr\left[\Pi_x\frac{\id_d}{d} \right] = \frac{1}{d}
    \end{split}
\end{equation}
\normalsize
where we used Eq.~\eqref{eq:twirl-1}. We also need to compute
\small
    \begin{equation} \begin{split}
    &\mathbb{E}_{\SPBB(d/2)}\left[\Tr[U\rho_0 U^\dagger \Pi_x]^2\right]\\
     &= \Tr\left[\Pi_x^{\otimes 2}\int_{\SPBB(d/2)} d\mu(U) U^{\otimes 2}\rho_0^{\otimes 2} (U^\dagger)^{\otimes 2} \right]\\
    &=  \Tr\left[\Pi_x^{\otimes 2}\left(\frac{\id_d\otimes \id_d+{\rm SWAP} }{d(d+1)}  \right)\right]\\
    & =  \frac{2}{d(d+1)}=\frac{z_H}{d}  \,,
    \end{split}
\end{equation}
\normalsize
where we used Eq.~\eqref{eq:twirl-2}, together with $\Tr\left[\rho_0^{\otimes 2}\right]=1$, $ \Tr\left[\rho_0^{\otimes 2} {\rm SWAP}\right] =\Tr\left[\rho_0^2\right]=1$, and
\footnotesize
    \begin{equation} 
    \begin{split}
    &\Tr\left[\rho_0^{\otimes 2} \Pi_s\right] \\
    &= \Tr\left[ \left(\id_d \otimes \Omega \rho_0 \Omega \rho_0^T\right) \Pi\right]\\
    &= \Tr\left[ \left(\id_d \otimes\Omega \rho_0 \Omega \rho_0\right) \Pi\right] \\
    &= \frac{1}{d^2} \Tr\left[ \left(\id_d \otimes \Omega \Big(\sum_{P_i\in\{\id,Z\}^{\otimes n}} P_i\Big) \Omega \Big(\sum_{P_{i'}\in\{\id,Z\}^{\otimes n}} P_i'\Big)\right)\Pi \right] \\
    &= \frac{1}{d^2} \Tr\left[ \left(\id_d \otimes \sum_{P_i\in\{\id,Z\}^{\otimes n}}\Omega  P_i\Omega  P_i\Big)\right)\Pi \right]  \\ 
    & =0\,.
\end{split}
\end{equation}
\normalsize
Hence, we arrive at 
\begin{equation}
    {\rm Pr}_{U\sim \mu}\left(\left|\bra{x} U \ket{0}^{\otimes n}\right|^2>\frac{\alpha}{d}\right) \geq (1-\alpha)^2\, \frac{d(d+1)}{2d^2}\,,
\end{equation}
which implies
\begin{equation}
    {\rm Pr}_{U\sim \mu}\left(\left|\bra{x} U \ket{0}^{\otimes n}\right|^2\geq\frac{\alpha}{d}\right) \geq \frac{(1-\alpha)^2}{2} \,.
\end{equation}
    
\end{proof}

\section{Tensor network-based calculation of the moments of shallow random  $\mathbb{SP}(2)$ circuits}

Let us recall that in~\cite{braccia2024computing} the authors present a toolbox to compute expectation values of circuits composed of local random gates via tensor networks. As mentioned in the main text, the key idea developed therein is to map the evaluation of $\widehat{\TC}^{(t)}_{\SC}$ in Eq.~\eqref{eq:moment-product} to a  Markov-chain like process where the $\widehat{\TC}^{(t)}_{G_l}$ only act on their local commutants. As such, in order to use such formalism, we need to a tensor representation for the superoperators $\widehat{\TC}_{\mathbb{SP}(2)}^{(2)}=\int_{\mathbb{SP}(2)} d\mu(V) V^{\otimes 2}\otimes (V^*)^{\otimes 2}$ and $\widehat{\TC}_{\mathbb{SO}(4)}^{(2)}=\int_{\mathbb{SO}(4)} d\mu(V) V^{\otimes 2}\otimes (V^*)^{\otimes 2}$.

We have shown in the main text that a basis for $\CC^{(2)}_{\SPBB(2)}$ is $\{\id_d\otimes\id_d, {\rm SWAP}, \Pi_s\}$, where we recall that these operators act on two copies of the two qubits targeted by a $\mathbb{SP}(2)$ gate. We will refer to these two two-qubit systems as $A$ and $B$.
With this in mind, we note that 
\small
\begin{equation}
    {\rm SWAP} = {\rm SWAP}_A \otimes {\rm SWAP}_B = \frac{\id_A + {\rm S}_A}{2} \otimes \frac{\id_B + {\rm S}_B}{2}\, ,
\end{equation}
\normalsize
where $S_{J}=X_{J_1}\otimes X_{J_2}+Y_{J_1}\otimes Y_{J_2}+Z_{J_1}\otimes Z_{J_2}$.
In the same spirit one finds
\begin{equation}
    \Pi_s = \frac{-\id_A + {\rm S}_A}{2} \otimes \frac{\id_B + {\rm B}_B}{2} \,,
\end{equation}
where ${\rm B}_{J}=X_{J_1}\otimes X_{J_2}-Y_{J_1}\otimes Y_{J_2}+Z_{J_1}\otimes Z_{J_2}$.
Due to these factorization properties, we can see that $\widehat{\TC}_{\mathbb{SP}(2)}^{(2)}$ projects in a basis that can be decomposed as a tensor product of the form  $\{\id, {\rm S}\}_A \otimes \{\id, {\rm S}, {\rm B}\}_B$. Crucially, we remark that the asymmetry of this basis arises from the fact that there is a preferred qubit in our construction of symplectic circuits. 

Then, we recall that it was shown in~\cite{braccia2024computing} that $\widehat{\TC}_{\mathbb{SO}(4)}^{(2)}$ will project on a local tensor product basis of the form $\{\id, {\rm S}, {\rm B}\}_A \otimes \{\id, {\rm S}, {\rm B}\}_B$, meaning that we can describe the full action of $\widehat{\TC}^{(2)}_{\SC}$ as acting on a vector space of dimension $2\times 3^{n-1}$ (which again reveals the special role played by the first qubit). The only thing that remains is to describe the action of $\widehat{\TC}_{\mathbb{SP}(2)}^{(2)}$ and $\widehat{\TC}_{\mathbb{SO}(4)}^{(2)}$ in this space. Given that the description of $\widehat{\TC}_{\mathbb{SO}(4)}^{(2)}$ in the aforementioned basis was already presented in~\cite{braccia2024computing}, we here now only derive that of  $\widehat{\TC}_{\mathbb{SP}(2)}^{(2)}$. In particular, all that we need to do is to compute how this operator acts on all basis states in $\{\id, {\rm S}\} \otimes \{\id, {\rm S}, {\rm B}\}$, which we can compute via the  twirl map of Eq.~\eqref{eq:twirl-2}. A direct calculation reveals that $\widehat{\TC}_{\mathbb{SP}(2)}^{(2)}$ acts on this reduced space as a matrix $\tau$ given by
\begin{equation}
    \tau=\begin{pmatrix}
        1 & 0 & 0 & 0 & 0 & 0 \\
        0 & 1/4 & 3/20 & 1/10 & 1/4 & -3/20 \\
        0 & 3/20 & 1/4 & -1/10 & 3/20 & -1/4 \\
        0 & 3/20 & -3/20 & 3/10 & 3/20 & 3/20 \\
        0 & 3/5 & 0 & 3/5 & 3/5 & 0 \\
        0 & 0 & -3/5 & 3/5 & 0 & 3/5 
    \end{pmatrix}\, .
\end{equation}

\clearpage

\end{document}